\documentclass[11pt]{article}
\usepackage{amsfonts,amssymb,amsmath,latexsym,ae,aecompl}
\usepackage{graphics}
\usepackage{latexsym}
\usepackage{algorithm}
\usepackage{algorithmic}
\usepackage{tikz}
\usepackage{caption}
\usepackage{subcaption}
\usepackage{subfig}
\newcommand{\qed}{\hfill$\Box$}

\newcommand{\cG}{{\cal G}}
\newcommand{\cB}{{\cal A}}
\newcommand{\cH}{{\cal H}}
\newcommand{\cA}{{\tt Ring}}

\newenvironment{proof}{\noindent {\bf Proof.}}{\qed}

\newtheorem{theorem}{Theorem}[section]
\newtheorem{lemma}{Lemma}[section]
\newtheorem{corollary}{Corollary}[section]

\newtheorem{proposition}{Proposition}[section]

\topmargin-0.5in
\oddsidemargin0in
\textheight      9.0in
\textwidth       6.5in
\begin{document}

\baselineskip 0.2in
\parskip      0.1in
\parindent    0em

\bibliographystyle{plain}

\title{{\bf Deterministic Graph Exploration with Advice}}

\author{
Barun Gorain \footnotemark[1]
\and
Andrzej Pelc \footnotemark[1] \footnotemark[2]
}

\date{ }
\maketitle
\def\thefootnote{\fnsymbol{footnote}}

\footnotetext[1]{
\noindent
 D\'{e}partement d'informatique, Universit\'{e} du Qu\'{e}bec en Outaouais,
Gatineau, Qu\'{e}bec J8X 3X7,
 Canada. E-mails:
{\tt baruniitg123@gmail.com}, {\tt pelc@uqo.ca}}

\footnotetext[2]{
\noindent
Research supported in part by NSERC  Discovery Grant 8136 -- 2013  and by the
Research Chair in Distributed Computing of the
Universit\'{e} du Qu\'{e}bec en Outaouais.
}

\begin{abstract}
We consider the fundamental task of graph exploration. An $n$-node graph has unlabeled nodes,
and all ports at any node of degree $d$ are arbitrarily numbered $0,\dots, d-1$. A mobile agent, initially
situated at some starting node $v$, has to visit all nodes and stop.
The {\em time} of the exploration is the number of edge traversals. We consider the problem of how much knowledge
the agent has to have a priori, in order to explore the graph in a given time, using a deterministic algorithm.  Following the paradigm of {\em algorithms with advice},
this a priori information (advice) is provided to the agent by an {\em oracle}, in the form of a binary string, whose length is called the {\em size of advice}.
We consider two types of oracles. The {\em instance oracle} knows the entire instance of the exploration problem,
i.e., the port-numbered map of the graph and the starting node of the agent in this map. The {\em map oracle} knows the
port-numbered map of the graph but does not know the starting node of the agent. What is the minimum size of advice that must be given to the agent
by each of these oracles, so that the agent explores the graph in a given time?

We first consider exploration in polynomial time, and determine the exact minimum size of advice to achieve it. Indeed, we prove that some advice
of size $\log\log\log n -c$, for any constant $c$, is sufficient to permit polynomial exploration, and that no advice of size  $\log\log\log n -\phi(n)$,
where $\phi$ is any function diverging to infinity, can help to do this. These results hold both for the instance and for the map oracles.

On the other side of the spectrum, when advice is large, there are two natural time thresholds:  $\Theta(n^2)$ for a map oracle, and $\Theta(n)$ for an instance oracle.
This is because, in both cases, these time benchmarks can be achieved  with sufficiently large advice (advice of size $O(n\log n)$ suffices).
We show that, with a map oracle, time $\Theta(n^2)$ cannot be improved in general, regardless of the size of advice. What is then the smallest advice to achieve time $\Theta(n^2)$ with a map oracle?
We show that this smallest size of advice is larger than $n^\delta$, for any $\delta <1/3$.

For large advice, the situation changes significantly when we allow an instance oracle instead of a map oracle. In this case, advice of size $O(n\log n)$ is enough to  achieve time $O(n)$.
Is such a large advice needed to achieve linear time? We answer this question affirmatively. Indeed, we show more: with any advice of size $o(n\log n)$, the time of exploration must be at least $n^\epsilon$, for any
$\epsilon <2$, and with any advice of size $O(n)$, the time must be  $\Omega(n^2)$.

We finally look at hamiltonian graphs, as for them it is possible to achieve the absolutely optimal exploration time $n-1$, when sufficiently large advice (of size  $O(n\log n)$) is given by an instance oracle.
We show that a map oracle cannot achieve this: regardless of the size of advice,  the time of exploration must be $\Omega(n^2)$, for some hamiltonian graphs. On the other hand, even for the instance oracle,
with advice of size $o(n\log n)$, optimal time $n-1$ cannot be achieved: indeed, we show that the time of exploration with such advice must sometimes exceed the optimal time $n-1$ by a summand
$n^\epsilon$, for any $\epsilon <1$.


{\bf Keywords:} algorithm, graph, exploration, mobile agent, advice

\end{abstract}

\pagebreak

\section{Introduction}

Exploration of networks by visiting all of their nodes is one of the basic tasks performed by a mobile agent in networks.
In applications, a software agent may need to collect data placed at nodes of a network, or a mobile robot
may need to collect samples of air or ground in a contaminated mine whose corridors form links of a network, with
corridor crossings represented by  nodes.

The network is modeled as a simple connected
undirected graph $G=(V,E)$ with $n$ nodes, called {\em graph} in the sequel.
Nodes are unlabeled, and all ports at any node of degree $d$ are arbitrarily numbered $0,\dots, d-1$.
The agent is initially situated at a starting node $v$ of the graph. When the agent located at a current node $u$ gets to a neighbor $w$ of $u$
by taking port $i$, it learns the port $j$ by which it enters node $w$ and it learns the degree of $w$.
The agent has to visit all nodes of the graph and stop.

The {\em time} of the exploration is the number of edge traversals. We consider the problem of how much knowledge
the agent has to have a priori, in order to explore the graph in a given time, using a deterministic algorithm. It is well-known that some information is necessary, as witnessed even by the class of rings in which
ports at all nodes are numbered 0,1 in clockwise order. Navigating in such a ring, the agent cannot learn its size. If there existed an exploration algorithm not using any a priori knowledge,
then it would have to stop after some $t$ steps in every ring, and hence would fail to explore a $(t+2)$-node ring.

Following the paradigm of {\em algorithms with advice} \cite{AKM01,DP,EFKR,FGIP,FIP1,FIP2,FKL,FP,FPR,GPPR02,IKP,KKP05,SN},
this a priori information (advice), needed for exploration, is provided to the agent by an {\em oracle}, in the form of a binary string, whose length is called the {\em size of advice}.
We consider two types of oracles. An {\em instance oracle} knows the entire instance of the exploration problem,
i.e., the port-numbered map of the graph and the starting node of the agent in this map. A {\em map oracle} knows the
port-numbered map of the graph but does not know the starting node of the agent. Formally, a map oracle is a function $f:{ \cal G} \longrightarrow S$, where $\cal G$ is the set of graphs and $S$ is the set of finite binary strings. An instance oracle is a function $f:{ \cal I} \longrightarrow S$, where $\cal I$ is the set of couples $(G,v)$,
with $G\in \cal G$ and $v$ being the starting node of the agent in graph $G$.
The advice $s$ is an input to an exploration algorithm.  We say that exploration in time $t$ with advice of size $x$ given by an instance oracle is possible, if there exists advice of length $x$ depending on the instance $(G,v)$,
 and an exploration algorithm using this advice, which explores every graph in time $t$, starting from node $v$. Likewise, we say that
 exploration in time $t$ with advice of size $x$ given by a map oracle is possible,  if there exists advice of length $x$ depending on the graph $G$,
 and an exploration algorithm using this advice, which explores every graph in time $t$, starting from any node.
(Integers $x$ and $t$ depend on the size of the graph.) Proving that such an exploration is possible consists in showing an oracle of the appropriate type giving advice of size $x$, and an exploration algorithm using this advice and working in time $t$, for any graph and any starting node.
Proving that such an exploration is impossible consists in showing that, for any oracle of the appropriate type giving advice of size $x$, and for any exploration algorithm
using it, there exists a graph and a starting node for which this algorithm will exceed time $t$.

The central question studied in this paper is:
\begin{quotation}
What is the minimum size of advice that has to be given to the agent
by an instance oracle (resp. by a map oracle) to permit the agent to explore any graph in a given time?
\end{quotation}

Our main contribution are negative results of two types:
\begin{itemize}
\item
impossibility results showing that the less powerful map oracle cannot help to achieve the same exploration time as the more powerful instance oracle, regardless of the size of advice;
\item
lower bounds showing that the size of some natural advice leading to a simple exploration in a given time cannot be improved significantly.
\end{itemize}
While in most cases our bounds on the size of advice are asymptotically tight,  in one case the remaining gap  is cubic.

\subsection{Our results}

We first consider exploration in polynomial time, and determine the exact minimum size of advice to achieve it. Indeed, we prove that some advice
of size $\log\log\log n -c$, for any constant $c$, is sufficient to permit polynomial exploration of all $n$-node graphs, and that no advice of size  $\log\log\log n -\phi(n)$,
where $\phi$ is any function diverging to infinity, can help to do this. Both these results hold both for the instance and for the map oracles.

On the other side of the spectrum, when advice is large, there are two natural time thresholds:  $\Theta(n^2)$ for a map oracle, and $\Theta(n)$ for an instance oracle.
This is because, in both cases, these time benchmarks can be achieved  with sufficiently large advice (advice of size $O(n\log n)$ suffices).
We show that, with a map oracle, time $\Theta(n^2)$ cannot be improved in general, regardless of the size of advice. What is then the smallest advice to achieve time $\Theta(n^2)$ with a map oracle?
We show that this smallest size of advice is larger than $n^\delta$, for any $\delta <1/3$.

For large advice, the situation changes significantly when we allow an instance oracle instead of a map oracle. In this case, advice of size $O(n\log n)$ is enough to  achieve time $O(n)$.
Is such a large advice needed to achieve linear time? We answer this question affirmatively. Indeed, we show more: with any advice of size $o(n\log n)$, the time of exploration must be at least $n^\epsilon$, for any
$\epsilon <2$, and with any advice of size $O(n)$, the time must be  $\Omega(n^2)$.

We finally look at hamiltonian graphs, as for them it is possible to achieve the absolutely optimal exploration time $n-1$, when sufficiently large advice (of size  $O(n\log n)$) is given by an instance oracle.
We show that a map oracle cannot achieve this: regardless of the size of advice,  the time of exploration must be $\Omega(n^2)$, for some hamiltonian graphs. On the other hand, even for an instance oracle,
with advice of size $o(n\log n)$, optimal time $n-1$ cannot be achieved: indeed, we show that the time of exploration with such advice must sometimes exceed the optimal time $n-1$ by a summand
$n^\epsilon$, for any $\epsilon <1$.

Our results permit to compare advice of different size and of different quality. The size is defined formally, and for quality we may say that advice given by an instance oracle is superior to advice given by a map oracle, because an instance oracle, seeing not only the graph but also the starting node of the agent, can use the allowed bits of advice in a better way. Looking from this perspective it turns out that both size and quality of advice provably matter. The fact that quality of advice matters is proved
by the following pair of results: for a map oracle, time $\Theta(n^2)$ cannot be beaten, regardless of the size of advice, while for an instance oracle time $O(n)$
can be achieved with $O(n\log n)$ bits of advice. The fact that the size of advice matters (with the same quality) is proved by the following pair of results: for an instance oracle, time $O(n)$ can be achieved with $O(n\log n)$ bits of advice, but with $o(n\log n)$ bits of advice time must be at least $n^\epsilon$, for any
$\epsilon <2$.

\subsection{Related work}

The problem of
exploration and navigation of mobile agents in an unknown environment
has been extensively studied in the literature for many decades (cf. the survey \cite{RKSI}).
The explored environment has been modeled in two distinct ways:
either as a geometric terrain in the plane, e.g., an
unknown terrain with convex obstacles \cite{BRS},
or a room with polygonal \cite{DKP} or rectangular \cite{BBFY} obstacles, or
as we do in this paper, i.e.,
as a graph, assuming that the agent may only move along its edges. The graph
model can be further specified in two different ways:
either the graph is directed, in which case the agent can move only from
tail to head of a directed edge  \cite{AH,BFRSV,BS,DP}, or the graph is undirected (as we assume)
and the agent can traverse edges in both directions  \cite{ABRS,BRS2,DKK,PaPe,PaPe2}.
The efficiency measure adopted in most papers dealing with exploration of graphs is the time (or cost)
of completing this task, measured by the number of edge traversals by the agent.
Some authors impose further restrictions on the moves of the agent.
It is assumed that the agent has either a restricted tank \cite{ABRS,BRS2},
and thus has to periodically return to the base for refueling, or that it is attached to the
base by a rope or cable of restricted length \cite{DKK}.

Another direction of research concerns
exploration of anonymous graphs.
In this case it is impossible
to explore arbitrary graphs and stop after exploration, if no marking of nodes is allowed, and if nothing is known about the graph.
Hence some authors \cite{BFRSV,BS}
allow {\em pebbles} which the agent can drop on nodes to recognize already visited ones, and
then remove them and drop them in other places. A more restrictive scenario assumes
a stationary token that is fixed at the starting node of the agent \cite{CDK,PeTi}.
Exploring
anonymous graphs without the possibility of marking nodes (and thus possibly without stopping)
is investigated, e.g., in \cite{DFKP,FI}.
The authors
concentrate attention not on the cost of exploration but on the minimum amount of memory sufficient
to carry out this task. In the absence of marking nodes,  in order to guarantee stopping after exploration, some knowledge about the graph is required,
e.g., an upper bound on its size \cite{CDK,Re}.

Providing nodes or agents with arbitrary kinds of information that can be used to perform network tasks more efficiently has been previously
proposed in \cite{AKM01,DP,EFKR,FGIP,FIP1,FIP2,FKL,FP,FPR,GPPR02,IKP,KKP05,SN} in contexts ranging from graph coloring to broadcasting and leader election. This approach was referred to as
{\em algorithms with advice}.
The advice is given either to nodes of the network or to mobile agents performing some network task.
In the first case, instead of advice, the term {\em informative labeling schemes} is sometimes used, if different nodes can get different information.

Several authors studied the minimum size of advice required to solve
network problems in an efficient way.
 In \cite{KKP05}, given a distributed representation of a solution for a problem,
the authors investigated the number of bits of communication needed to verify the legality of the represented solution.
In \cite{FIP1}, the authors compared the minimum size of advice required to
solve two information dissemination problems using a linear number of messages.
In \cite{FKL}, it was shown that advice of constant size given to the nodes enables the distributed construction of a minimum
spanning tree in logarithmic time.
In \cite{DKM,EFKR}, the advice paradigm was used for online problems. In particular, in \cite{DKM} the authors studied online graph exploration with advice in
labeled weighted graphs.
In \cite{FGIP}, the authors established lower bounds on the size of advice
needed to beat time $\Theta(\log^*n)$
for 3-coloring cycles and to achieve time $\Theta(\log^*n)$ for 3-coloring unoriented trees.
In the case of \cite{SN}, the issue was not efficiency but feasibility: it
was shown that $\Theta(n\log n)$ is the minimum size of advice
required to perform monotone connected graph clearing.
In \cite{IKP}, the authors studied radio networks for
which it is possible to perform centralized broadcasting in constant time. They proved that constant time is achievable with
$O(n)$ bits of advice in such networks, while
$o(n)$ bits are not enough. In \cite{FPR}, the authors studied the problem of topology recognition with advice given to nodes.
The topic of \cite{GMP} and \cite{DiPe} was the size of advice needed for fast leader election, resp. in anonymous trees and in arbitrary anonymous graphs.
Exploration with advice was previously studied only for trees \cite{FIP2}, and algorithm performance was measured using the competitive approach.
In the present paper, the performance measure of an algorithm is the order of magnitude of exploration time, and hence the case of trees is trivial, as they can be explored in linear time without any advice.


\section{Exploration in polynomial time}

As a warm-up, we first consider the following question: What is the minimum size of advice permitting the agent to explore any graph in time polynomial in the size of the graph? In this section we give the exact answer to this question, both for the instance oracle and for the map oracle.

It is well-known that, if the agent knows an upper bound $n'$ on the number $n$ of nodes of the graph, then exploration in time polynomial in $n'$ is possible, starting from any node of the graph.
The first result implying this fact was proved in \cite{AKLLR}. The exploration proposed there works in time $O(n'^5\log n')$, and is based on Universal Traversal Sequences (UTS). Later on, an exploration algorithm working in time polynomial in $n'$ based on Universal Exploration Sequences (UXS) was established in \cite{Re}. While the polynomial in the latter paper has much higher degree, the solution from \cite{Re} can be carried out in logarithmic memory. Both UTS and UXS permit to
find a sequence of port numbers to be followed by the agent, regardless of the topology of the graph and of  its starting node. In the case of UTS, the sequence of port numbers to be followed is the UTS itself, and in the case of UXS it is constructed term by term, on the basis of the UXS and of the port number by which the agent entered
the current node. Regardless of which solution is used, we have the following proposition:

\begin{proposition}\cite{AKLLR,Re}\label{uts}
If the agent knows an upper bound $n'$ on the number $n$ of nodes of the graph, there exists an algorithm with input $n'$ that permits the agent starting at any node of the graph to explore the graph and stop after $P(n')$ steps, where $P$ is some polynomial.
\end{proposition}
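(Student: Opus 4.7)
The plan is to reduce the statement directly to the existence of polynomial-length Universal Exploration Sequences (UXS) established by Reingold, or alternatively to Universal Traversal Sequences (UTS) of Aleliunas et al. Since the proposition is explicitly attributed to \cite{AKLLR,Re}, the proof amounts to translating the UXS/UTS guarantee into the agent's behavior and adding an explicit stopping rule.

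First I would recall the definition of a UXS for $n$-node graphs: a finite sequence $a_1,a_2,\ldots,a_L$ of nonnegative integers such that for every connected graph $G$ on at most $n$ nodes, every port numbering of $G$, and every starting node $v$, the walk defined by ``if the agent is at a node of degree $d$, having entered through port $p$, then leave through port $(p+a_i) \bmod d$'' (with the first move starting from some convention, e.g., leaving through port $a_1 \bmod d(v)$) visits every node of $G$. Reingold's construction produces such a sequence of length $L(n)$ polynomial in $n$, and moreover computable on the fly (and in logspace). With the upper bound $n'$ in hand, the agent simply has enough information to generate the UXS for all graphs of size at most $n'$.

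Next I would describe the algorithm: on input $n'$, the agent computes (term by term) the UXS of length $L(n')$ for $n'$-node graphs, and performs the corresponding $L(n')$ moves, keeping track at each step of the port by which it just entered the current node (which it observes as part of the model) and of the degree of the current node. After the $L(n')$-th move the agent halts. By the defining property of the UXS and the assumption $n\le n'$, every node of $G$ has been visited during this walk; this is exactly what exploration with stop requires. The total time is $L(n')$, which is polynomial in $n'$, so we may set $P(n')=L(n')$.

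There is essentially no obstacle beyond invoking \cite{Re}: correctness and the polynomial bound on time are built into the UXS guarantee, and the stopping rule is trivial because the walk length $L(n')$ is a deterministic function of the input $n'$. The one small point worth spelling out (and the only place one could fumble) is that the agent does not need to know $n$ exactly: a UXS for $n'$-node graphs explores every graph of size at most $n'$, so the same sequence works for every actual $n\le n'$. This is precisely why an upper bound, rather than the exact size, suffices.
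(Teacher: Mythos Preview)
Your proposal is correct and matches the paper's treatment: the paper does not give a standalone proof of this proposition but simply attributes it to \cite{AKLLR,Re}, with the preceding paragraph explaining (just as you do) that a UTS or UXS for graphs of size at most $n'$ yields a sequence of ports the agent can follow to visit all nodes and then stop after $P(n')$ steps. Your write-up is essentially an expanded version of that explanatory paragraph.
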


The positive part of our result on minimum advice is formulated in the following lemma. Its proof is based on Proposition \ref{uts}. The advice given to the agent
is some prefix of the binary representation of  the number $\lfloor \log\log n\rfloor$, on the basis of which the agent computes a rough but sufficiently precise upper bound on the size of the graph which permits it to explore the graph, in time polynomial in its size.

\begin{lemma}\label{poly-ub}
For any positive constant $c$, there exists an exploration algorithm using advice of size $\lfloor \log \log \log n -c \rfloor$,
that works in time polynomial in $n$, for any $n$-node graph.
\end{lemma}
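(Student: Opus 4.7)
The plan is to use the advice to encode a coarse but additively precise estimate of $L := \lfloor \log_2 \log_2 n \rfloor$, and then to invoke Proposition \ref{uts} with $n' := 2^{2^{L''}}$ for the agent's estimate $L''$. The essential observation is that we need an additive approximation $L'' \leq L + O(1)$: any merely multiplicative approximation $L'' \leq \kappa L$ with constant $\kappa > 1$ would give $2^{L''} \leq (\log n)^\kappa$ and hence $n' \leq 2^{(\log n)^\kappa} = n^{(\log n)^{\kappa-1}}$, which is super-polynomial in $n$.

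The encoding goes as follows. Fix an integer constant $K := \lceil c \rceil + 2$, and let the oracle transmit the binary representation of $V := \lfloor L / 2^K \rfloor$, padded with leading zeros to exactly $a := \lfloor \log\log\log n - c \rfloor$ bits. A short bit-count check using $V \leq L/2^K \leq \log\log n / 2^K$ shows that $V < 2^a$, so the advice fits. Note that the bits written down are just the binary representation of $L$ with its lowest $K$ bits removed, i.e., a prefix of the binary representation of $L$, matching the informal description given before the lemma statement.

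The agent reads the $a$ bits as the integer $V$, sets $L'' := (V+1) \cdot 2^K$ and $n' := 2^{2^{L''}}$, and runs the algorithm of Proposition \ref{uts} with input $n'$. From $V \cdot 2^K \leq L < (V+1) \cdot 2^K$ one obtains $L < L'' \leq L + 2^K$, and since both $L$ and $L''$ are integers this gives $L'' \geq L + 1$, whence $\log n \leq 2^{L+1} \leq 2^{L''}$, i.e., $n \leq n'$. Moreover $\log n' = 2^{L''} \leq 2^{L+2^K} = 2^{2^K} \cdot 2^L \leq 2^{2^K} \log n$, so $n' \leq n^{2^{2^K}}$ is polynomial in $n$. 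Thus the exploration succeeds in $P(n') = n^{O(1)}$ steps.

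The main subtlety to get right is the additive-versus-multiplicative approximation issue. A naive scheme that transmits literally the top $a$ bits of $L$'s binary representation fails, because the agent does not know the bit-length $b$ of $L$ and must account for the worst-case scale $2^{b-a}$; this produces a bound multiplicative in $L$, which as noted is insufficient. Inserting the explicit divisor $2^K$ with known $K$ forces the reconstruction error to the absolute constant $2^K$, independent of $L$. Finally, the finitely many values of $n$ for which $a \leq 0$ are handled by a default algorithm that uses no advice at all and applies Proposition \ref{uts} with a constant $n'$ depending only on $c$.
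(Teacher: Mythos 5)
Your proof is correct and takes essentially the same approach as the paper's: compress $\lfloor\log\log n\rfloor$ by discarding a constant number of low-order bits (equivalently, dividing by a fixed power of $2$), let the agent reconstruct an additive over-approximation and apply Proposition~\ref{uts} with $n' = 2^{2^{L''}}$, then observe that the additive error $O(2^K)$ in the double logarithm yields only a polynomial blow-up. The paper implements the same idea slightly differently — it transmits the binary representation of $\lfloor\log\log n\rfloor$ with the last $c+1$ bits deleted, and the agent appends $c+1$ ones, which amounts to your $L'' = (V+1)\cdot 2^{c+1}$ — so the two schemes coincide up to the choice of the constant shift; your explicit handling of the leading-zero padding and of the finitely many $n$ with $a\le 0$ are minor refinements that the paper leaves implicit.
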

\begin{proof}
Let $A$ be an algorithm  and let $P$ be a polynomial such that, if the agent knows an upper bound $n'$ of the number $n$ of nodes of a graph $G$, then it can explore $G$ in time $P(n')$, starting from any node, using algorithm $A$ with input $n'$.
Without loss of generality suppose $P(m)= m^a$ for some constant $a \in \mathbb{N}$.  We will show that there exists a binary string $s$ of length $\lfloor \log \log \log n -c \rfloor$ such that if $s$ is given to the agent as advice, the agent can explore all nodes of $G$ in time polynomial in $n$. To show the existence of such a string $s$, let $X$ be the binary representation of $\lfloor \log \log n \rfloor$. Let $s$ be the string obtained from $X$ by deleting the last $c+1$ bits of $X$.
The length of $s$ is at most  $\lfloor \log \log \log n -c \rfloor$. This string $s$ is given to the agent as the advice.

Let $s_1$ be string resulting from $s$ by adding $(c+1)$ 1's at the end of $s$. Let $n_1$ be the integer whose binary representation is $s_1$.
Let $N=2^{2^{n_1+1}}$. By definition, we have $n_1 \geq \lfloor \log \log n \rfloor$. Hence, $n_1+1 \geq  \log \log n$, and thus $N \geq n$.

After receiving the string $s$, the agent computes the integer $N$ and performs algorithm $A$ with input $N$.  Since $N \geq n$, the agent correctly explores
the graph in time $P(N)$.

To prove that the exploration time is polynomial in $n$, let $s_0$ be the string which is obtained from $s$ by adding $c+1$ 0's at the end of $s$. Let $n_0$ be the integer whose binary representation is $s_0$.
We have $n_1= n_0+(2^{c+1}-1)$. Therefore, \\
  $$n_0 \le \lfloor \log \log n \rfloor \le n_0+(2^{c+1}-1)$$
  $$n_0 \le  \log \log n  \le n_0+(2^{c+1}-1)+1$$
  $$2^{n_0} \le  \log n  \le 2^{n_0+2^{c+1}}$$
  $$2^{2^{n_0}} \le n \le 2^{2^{n_0+2^{c+1}}}$$
Let $m=2^{2^{n_0}}$. Then $m \le n \le m^{2^{c+1}}$. Also, $P(n) \ge P(m)$.
Hence, the time taken by the agent is at most $P(N)=P(m^{2^{c+1}}) \le m^{a\cdot{2^{c+1}}} = P(m)^{2^{c+1}} \le P(n)^{2^{c+1}}$. Since $c$ is a constant, $P(n)^{2^{c+1}}$ is a polynomial in $n$.
\end{proof}

The next result shows that the upper bound from the previous lemma is tight. Indeed, the following lower bound holds even for oriented rings, i.e., rings in which ports 0 and 1 are in clockwise order at every node.

\begin{lemma}\label{poly-lb}
For any function $\phi:\mathbb{N} \rightarrow \mathbb{N}$ such that $\phi(n) \rightarrow \infty$ as $n \rightarrow \infty$,
 it is not possible to explore an $n$-node oriented ring in polynomial time,
using advice of size at most $ \log \log \log n - \phi(n)$.
\end{lemma}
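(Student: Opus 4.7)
The plan is a counting argument in $\log\log$-scale that exploits how limited an oriented ring is as a source of information during the walk. Observe first that in an oriented ring every node has degree $2$ and the entering port is forced by the previous move, so the agent's observations at each step are determined purely by the sequence of past port choices, independently of the ring size. Hence for each fixed advice string $\sigma$ the entire execution of the algorithm is the same on every oriented ring; we may unambiguously define $T(\sigma)$ as the number of steps performed before stopping and $R(\sigma)$ as the cardinality of the contiguous arc of positions visited, viewed in $\mathbb{Z}$. Exploration of the $n$-node oriented ring then succeeds if and only if $R(\sigma)\ge n$.

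Assume for contradiction that an algorithm $A$ and an integer $k$ exist such that, for every $n$, there is some advice $\sigma_n$ of length at most $s(n):=\log\log\log n - \phi(n)$ for which $A(\sigma_n)$ explores the $n$-ring in time at most $n^k$. Then for every such $\sigma:=\sigma_n$ we have $R(\sigma)\ge n$ and $T(\sigma)\le n^k\le R(\sigma)^k$. In particular $\sigma$ can successfully cover only ring sizes in $[T(\sigma)^{1/k},R(\sigma)]$. Combining $T(\sigma)\le R(\sigma)^k$ with $R(\sigma)\le T(\sigma)+1$, a short calculation gives
\[
\log\log R(\sigma) \;-\; \log\log \bigl(T(\sigma)^{1/k}\bigr) \;\le\; \log k + 1,
\]
so any single advice string $\sigma$ covers at most $c_k := \log k + 2$ distinct integer values of $\lfloor\log\log n\rfloor$.

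Without loss of generality $\phi$ is nondecreasing: replacing $\phi(n)$ by $\tilde\phi(n):=\min_{m\ge n}\phi(m)$ yields a nondecreasing function still diverging to $\infty$ and satisfying $\tilde\phi\le\phi$, so the corresponding advice bound only becomes larger and the impossibility statement only becomes harder. Pick a large integer $M$ and consider the ring sizes $n_m=2^{2^m}$ for $m=M,M+1,\dots,2M$, for which $\lfloor\log\log n_m\rfloor=m$ takes $M+1$ distinct values. The advice for each $n_m$ has length at most $\log m - \phi(n_m) \le 1+\log M - \phi_M$, where $\phi_M:=\phi(n_M)$. The total number of binary strings of length at most $1+\log M-\phi_M$ is at most $4M/2^{\phi_M}$, and each such string covers at most $c_k$ of the required values of $m$. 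Hence
\[
\frac{4 c_k M}{2^{\phi_M}} \;\ge\; M+1,
\]
which forces $2^{\phi_M}\le 4c_k\cdot M/(M+1) < 4c_k$, contradicting $\phi_M\to\infty$ as $M\to\infty$.

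The main technical point is the second step: carefully checking that the interval $[T(\sigma)^{1/k},R(\sigma)]$ really does map, under $n\mapsto\log\log n$, to an interval of length bounded by a constant depending only on $k$. The rest is straightforward pigeonhole bookkeeping, and since the oriented ring is vertex-transitive, the argument applies verbatim to both the map oracle and the (a priori stronger) instance oracle.
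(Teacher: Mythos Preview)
Your proof is correct and follows essentially the same pigeonhole strategy as the paper: both exploit that on an oriented ring the agent's behavior is fully determined by the advice string, so each string can serve only a polynomially bounded range of ring sizes, and there are too few short strings to cover all sizes. The paper realises the pigeonhole by explicitly listing ring sizes $t_i=\lfloor n^{(\log n)^{iz}}\rfloor$ and showing two must receive the same advice, whereas you phrase the same bound via the coverage interval $[T(\sigma)^{1/k},R(\sigma)]$ in $\log\log$-scale; this is a presentational difference rather than a genuinely different route.
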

\begin{proof}
The proof is by contradiction. Suppose that there exists an algorithm $\cA$, permitting the agent to explore an $n$-node oriented ring with at most $\log \log \log n - \phi(n)$ bits of advice, in time $f(n)$, where $f$ is a polynomial. Without loss of generality assume that $f(n)= n^a-2$  for some constant $a \in \mathbb{N}$.
Since $\phi$ diverges to infinity, there exists some $n_0 \in \mathbb{N}$ such that $2^{\phi(n)-2} >\log a$, for all $n \ge n_0$.
There are at most $ \frac{\log \log n}{2^{\phi(n)-2}}$ binary strings of length at most $ \lfloor\log \log \log n - \phi(n) \rfloor$. Define $z= {\frac{ 2^{\phi(n)-2}}{{\log \log n}}}$.

Take a family of $(\lceil \frac{\log \log n}{2^{\phi(n)-2}}\rceil+1)$ oriented rings $C_{t_i}$ with $t_i$ nodes, for $ 0 \le i \le \lceil\frac{\log \log n}{2^{\phi(n)-2}}\rceil$, where  $t_i=\lfloor n^{({\log n}) ^{iz}}\rfloor$ and $n \ge n_0$.
By the pigeonhole principle, there exist indices $i,j$ such that  $C_{t_i}$ and $C_{t_j}$ must have the same advice string, with $t_i < t_j$.
The correctness of $\cA$ implies that the agent explores all the nodes of $C_{t_i}$ and stops after $f(t_i)$ steps. Since the agent has the same advice for $C_{t_j}$, it also stops after $f(t_i)$ steps in $C_{t_j}$.

We have $t_j  \ge t_{i+1} \ge  t_i ^{({\log n}) ^{z}}-1$.
Since $2^{\phi(n)-2} =\frac{2^{\phi(n)-2}}{{(\log \log n)}} \log \log n > \log a$, we have
$\log (({\log n}) ^{\frac{2^{\phi(n)-2}}{{(\log \log n)}}})$ $ > \log a$. Hence, ${({\log n}) ^{z}} >a$. Therefore, $t_j > {t_i}^a -1=  f(t_i)+1$, i.e., $t_j -1> f(t_i)$. This implies that the agent stops after fewer than $t_j-1$ steps in $C_{t_j}$, but it is not possible to explore $C_{t_j}$ in fewer than $t_j-1$ steps, which is a contradiction.
Therefore, there does not exist any algorithm that can explore an $n$-node oriented ring with advice of size at most $ \log  \log \log n - \phi(n)$.
\end{proof}

Notice that  Lemmas \ref{poly-ub} and \ref{poly-lb} hold both for the instance oracle and for the map oracle. The positive result from  Lemma \ref{poly-ub}
holds even for the map oracle, as the advice concerns the size of the graph and does not require knowing the starting node of the agent. The negative result from  Lemma \ref{poly-lb}
holds even for the instance oracle, as it is true even in oriented rings, where knowledge of the starting node does not provide any insight, since all nodes look the same.
Hence  Lemmas \ref{poly-ub} and \ref{poly-lb} imply the following theorem that gives a precise answer to the question stated at the beginning of this section.

\begin{theorem}
The minimum size of advice permitting the agent to explore any graph in time polynomial in the size $n$ of the graph is $\log\log\log n -\Theta(1)$, both
 for the instance oracle and for the map oracle.
\end{theorem}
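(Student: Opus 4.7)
The plan is to derive the theorem as a direct consequence of the two preceding lemmas, relying on a simple comparison of the power of the two oracle types.

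First, I would establish the upper bound direction: advice of size $\lfloor \log\log\log n - c\rfloor$ suffices, for any constant $c$, even when given by the weaker map oracle. This is immediate from Lemma \ref{poly-ub}, because the advice constructed there is a prefix of the binary representation of $\lfloor\log\log n\rfloor$, which depends only on the size of the graph and not on the starting node. Hence a map oracle, which sees $G$ but not $v$, can produce exactly this string. Since any instance oracle can simulate a map oracle (by ignoring the starting node), the upper bound also holds for the instance oracle.

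Next, I would establish the matching lower bound: no advice of size $\log\log\log n - \phi(n)$ suffices for polynomial exploration, for any $\phi$ diverging to infinity, even when given by the stronger instance oracle. This follows from Lemma \ref{poly-lb}, whose lower bound is stated for oriented rings. In an oriented ring all nodes are indistinguishable (the local view at every node is the same pair of ports $0,1$ in clockwise order), so specifying the starting node in the map of an oriented ring carries no extra information beyond specifying the map itself. Consequently any advice produced by an instance oracle on an oriented ring could equally have been produced by a map oracle, and the impossibility from Lemma \ref{poly-lb} applies to instance oracles as well.

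Combining the two directions, the minimum size of advice for polynomial-time exploration lies between $\log\log\log n - c$ (for every constant $c$) from above and $\log\log\log n - \phi(n)$ (for every $\phi$ diverging to infinity) from below, for both oracle types. This is exactly $\log\log\log n - \Theta(1)$, which proves the theorem. There is no real obstacle here; the only point to be careful about is the direction of the inequalities between instance and map oracles on each side, which is why the two lemmas were stated respectively in their strongest meaningful forms (upper bound shown with a map oracle, lower bound shown already on oriented rings).
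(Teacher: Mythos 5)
Your proposal is correct and follows essentially the same route as the paper: the upper bound from Lemma \ref{poly-ub} transfers to the map oracle because the advice string depends only on $n$, and the lower bound from Lemma \ref{poly-lb} transfers to the instance oracle because oriented rings are vertex-transitive so the starting node adds no information. This is precisely the two-sentence argument the authors give just before stating the theorem.
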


\section{Fast exploration}

When advice given to the agent can be large,  there are two natural time thresholds:  $\Theta(n^2)$ for a map oracle, and $\Theta(n)$ for an instance oracle.
This is because, in both cases, these time benchmarks can be achieved  with sufficiently large advice. Indeed, we have the following proposition.

\begin{proposition}\label{ub}
1. There exists an exploration algorithm, working in time $O(n^2)$ and using advice of size $O(n\log n)$, provided by a map oracle, for $n$-node graphs.\\
2. There exists an exploration algorithm, working in time $O(n)$ and using advice of size $O(n\log n)$, provided by an instance oracle, for $n$-node graphs.
\end{proposition}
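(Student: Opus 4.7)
For Part 2 (instance oracle), the plan is to reduce to the standard fact that any spanning tree of $G$ can be completely traversed by a DFS Euler tour of length $2(n-1)$. The oracle, knowing both $G$ and $v$, picks any spanning tree $T$ of $G$ and outputs the sequence of $2(n-1)$ exit ports used along a DFS Euler tour of $T$ rooted at $v$. Each port number fits in $\lceil\log n\rceil$ bits, so the advice has size $O(n\log n)$. The exploration algorithm just applies the received port sequence move by move; after $2(n-1)=O(n)$ steps every node of $T$, hence of $G$, has been visited.

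For Part 1 (map oracle) the advice cannot depend on $v$, so I propose to describe a spanning tree $T$ of $G$ globally together with a canonical labeling of its nodes, and let the algorithm try every possible candidate for the starting label. Concretely, the oracle fixes a canonical node $r$ of $G$, a spanning tree $T$ rooted at $r$, and DFS pre-order labels $1,\dots,n$ on its nodes. The advice is the list of the $n-1$ tree edges, each written as a tuple (parent label, child label, port at parent, port at child); this fits in $O(n\log n)$ bits. From this information the agent can, for any candidate label $\ell\in\{1,\dots,n\}$, precompute a sequence $\sigma_\ell$ of $2(n-1)$ pairs (exit port, expected entry port) corresponding to a DFS Euler tour of $T$ starting at the node with label $\ell$.

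The algorithm then iterates $\ell=1,\dots,n$. For each $\ell$ it attempts to execute $\sigma_\ell$ from the current position (which at the start of every new attempt is the original start $v$), comparing at every move the actual entry port revealed by the model with the one predicted by $T$. If any port does not exist at the current node or any comparison fails, the attempt is aborted and the agent retraces its path back to $v$ by popping a stack of observed entry ports and using each one as the return port. Otherwise the attempt runs to completion. When $\ell$ coincides with $\ell^*$, the DFS rank of $v$, the attempt exactly replays the DFS Euler tour of $T$ rooted at $v$, every check succeeds, and every node of $G$ is visited. After iterating through all $n$ candidates the agent halts. Each candidate costs at most $2(n-1)$ forward steps and $2(n-1)$ retracing steps, i.e.\ $O(n)$, so the total exploration time is $O(n^2)$ and the advice size is $O(n\log n)$.

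The main obstacle is establishing the correctness of the guess-and-verify loop. One must argue that the attempt for the correct candidate $\ell^*$ can never be spuriously aborted (it cannot, because $T$'s edges are genuine edges of $G$ and the advice records their ports faithfully), and that retracing is always feasible after an abort (it is, because the model reveals the entry port of every traversed edge, so the agent can always undo each move by taking that port back). An incorrect candidate may, by a coincidence stemming from graph symmetries, complete its $\sigma_\ell$ without any mismatch, but since the algorithm does not terminate early and always runs all $n$ candidates, this does not interfere with correctness: the correct iteration is guaranteed to take place and by itself suffices to visit every node.
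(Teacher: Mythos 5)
Your proposal is correct and takes essentially the same route as the paper. For Part 2 the constructions are identical: the instance oracle encodes the port sequence of a DFS Euler tour of a spanning tree rooted at $v$, and the agent replays it in $2(n-1)$ steps. For Part 1 the idea is also the same: the map oracle encodes a port-numbered spanning tree $T$, and the agent, not knowing which node of $T$ it occupies, tries the Euler tour of $T$ from every candidate root $u$, aborting and returning to its start on a port mismatch; the attempt matching the true start succeeds, giving $n$ attempts of $O(n)$ steps each and $O(n^2)$ total. The only cosmetic differences are in the encoding of $T$ (the paper uses a $2n-2$-bit ``down/up'' shape string plus the port sequence; you use explicit labelled edge tuples, both $O(n\log n)$ bits) and in the retracing mechanism (the paper pre-appends the reversed Euler string $\overline{E}(u)$ to each attempt, while you pop a stack of observed entry ports; these do the same work). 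Your remark that a wrong candidate might spuriously complete but that this is harmless because the loop never terminates early is a valid and useful observation, and is implicitly present in the paper's version too.
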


\begin{proof}
In order to guarantee exploration time $O(n^2)$, it is enough to provide the agent with some port-numbered spanning tree $T$ of the graph.
Given such a tree $T$,  the agent
identifies an Euler tour $E(u)$ of $T$ starting at $u$, for any node $u$ of $T$. (The tour is coded as a sequence of port numbers, including both the outgoing port and the incoming port at every step.) Let $\overline{E}(u)$ be the reverse string of $E(u)$, and let  $F(u)$ be the concatenation of $E(u)$ and  $\overline{E}(u)$.
The agent performs each $F(u)$ one after another. If some tour is impossible to continue, because the required outgoing port is not available, or the incoming port is not matched (which can happen, if $u$ is not the starting node of the agent), the tour is simply aborted, and the agent backtracks to its starting node. The tour $F(v)$,
where $v$ is the starting node of the agent, must succeed, and this tour visits all nodes of $T$, and hence explores the graph. Since there are $n$ tours, each of length $O(n)$, this gives time $O(n^2)$.

In order to guarantee exploration time $O(n)$, it is enough to provide the agent with some rooted port-numbered spanning tree $T$ of the graph, where the root is the starting node of the agent. The agent performs an Euler tour of this rooted tree in time $2n-2$.

It remains to show that a port-numbered spanning tree can be given by a map oracle,  and a port-numbered spanning tree rooted at the starting node of the agent
can be given by an instance oracle, in both cases using  $O(n\log n)$ bits.
This can be done as follows. Consider the DFS tree $T$ rooted at some arbitrary node in the case of a map oracle, and rooted at the starting node of the agent, in the case of an instance oracle, where neighbors of a node are explored in order of increasing port numbers. Give the shape of $T$ as a binary sequence of length $2n-2$,
where a 0 means ``go down the tree'' and 1 means ``go up the tree'', in this DFS exploration. Moreover, give the sequence of all port numbers, as they are encountered
in order during this DFS exploration. This is a sequence of $O(n)$ terms each of which is a number smaller than $n$. Hence, both sequences can be given to the agent
using $O(n\log n)$ bits. On their basis, the agent first reconstructs the shape of $T$, and then puts the port numbers in appropriate places.
\end{proof}

In the rest of this section we prove negative results indicating the quality of the natural solution given in Proposition  \ref{ub}. For the map oracle, we show that
quadratic exploration time cannot be beaten, and we give a lower bound on the size of advice sufficient to guarantee this time. For the instance oracle, we show that Proposition  \ref{ub} gives optimal advice for linear exploration time.

\subsection{Map oracle}\label{sec:map}

Our first result for the map oracle shows that, regardless of the size of advice, exploration time $\Theta(n^2)$ cannot be beaten, for some $n$-node graphs.

We will use the following construction from \cite{BRT} of a family  $\cH_X$ of graphs.

Let $H$ be an $\frac{m}{2}$-regular graph with $m$ nodes, where $m$ is even, e.g., the complete bipartite graph. Let $T$ be the set of edges of any spanning tree of $H$. Let $S$ be the set of edges of $H$ outside $T$. Let $s=|S|=\frac{m^2}{4}-m+1$ and  $S=\{e_1,e_2, \cdots, e_s\}$.

For $x \in \{0,1\}^s \setminus \{ 0^s\}$, the $(2m)$-node graph $H_x$ is constructed from $H$ by taking two disjoint copies $H'$ and $H''$ of $H$, and crossing some pairs of edges  from one copy to the other.  For $i=1,\cdots ,s$, if the $i-$th bit of $x$ is 1, then the edge $e_i=(u_i,v_i)$ is deleted from both copies of $H$ and two copies of $e_i$ are crossed between the two copies of $H$.
More precisely, let $\{v_1,\cdots ,v_m\}$ be the set of nodes of $H$ and let $v'_i$ and $v''_i$ be the nodes corresponding to $v_i$, in $H'$ and $H''$, respectively.
Let $V'$ and $V''$ be the sets of nodes of $H'$ and $H''$, respectively.
Define $H_x= (V'\cup V'', E_x)$, where  $E_x=\{(v'_i,v'_j), (v''_i,v''_j) : (v_i,v_j) \in T\} \cup \{(v'_i,v''_j), (v''_i,v'_j) : e_k=(v_i,v_j) \in S{ ~~and ~~} x_k=1 \}$.
Let $\cH_X=\{H_x : x \in \{0,1\}^s \setminus \{ 0^s\}\}$.

According to the result from \cite{BRT}, for every node $v \in H$, there exists some sequence $x(v) \in \{0,1\}^s \setminus \{ 0^s\}$ such that if
an exploration of $H$ performed according to some sequence $W$ of port numbers, starting from node $v_1$, visits node $v$ at most $s$ times,
then in one of the copies $H'$ or $H''$ in $H_{x(v)}$ the node $v'$ or $v''$ is not visited at all, if the same sequence $W$ is used to explore the graph $H_{x(v)}$ starting
from $v'_1$. Intuitively, the result from \cite{BRT} shows a class of graphs with the property that if some node in one of these graphs is not visited many times, then the exploration algorithm fails {\em in some other graph} of this class. There is no control in which graph of the class this will happen.  We use the graphs  from \cite{BRT}
as building blocks to prove a different kind of lower bound. Indeed, we construct a {\em single graph} having the property that if some of its nodes is not visited many times, then exploration must fail in this graph. This will prove a lower bound on exploration time for some graph, even if the agent knows the entire graph.

Using the graphs $H_x \in \cH_X$ from \cite{BRT} we construct the graph
$\widehat{G}$ as follows.
For any  $1 \le i \le m$, let $ v'_1(i)$ be the node corresponding to node $ v'_1$ from $H'$ in the graph $H_{x(v_i)}$.
Connect the graphs $H_{x(v_i)}$, for $1 \le i \le m$, and an oriented cycle $C$ with nodes $\{y_1, \cdots, y_{m}\}$
(port numbers 0,1 are in clockwise order at each node of the cycle), by adding edges $(y_i, v'_1(i))$,
for $1 \le i \le m$. The  port numbers corresponding to these edges are:
2 at $y_i$   and $\frac{m}{2}$ at $v'_1(i)$. See Fig. \ref{graph G^}. The cycle $C$ is called the {\em main cycle} of $\widehat{G}$.

\begin{figure}[h]
\centering
\includegraphics[width=0.4\textwidth]{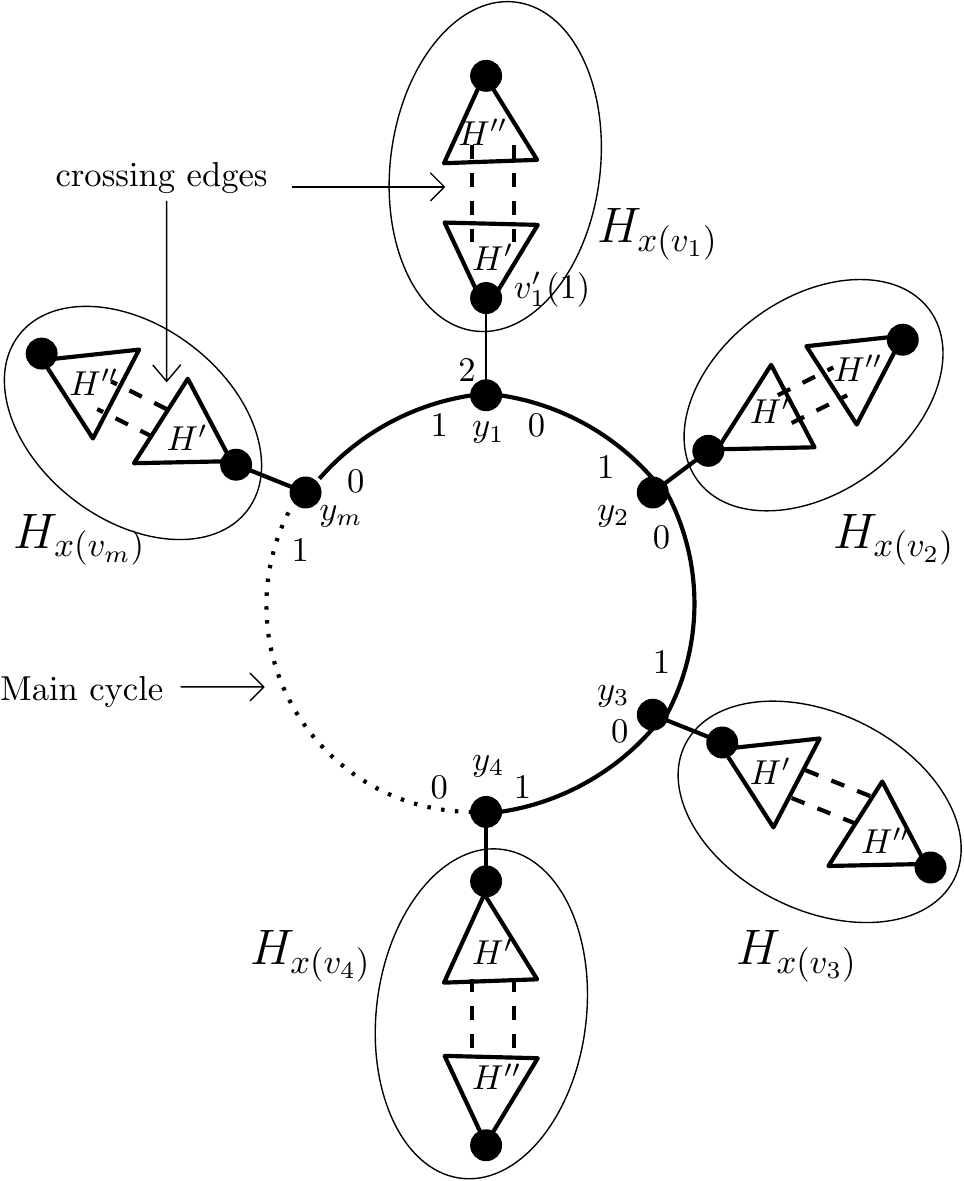}
\caption{Construction of $\widehat{G}$}
\label{graph G^}
\end{figure}

Let $n=2m^2 +m$ be the number of nodes in $\widehat{G}$.

By the construction of $\widehat{G}$, any exploration algorithm with the agent starting from any node of the main cycle, has the following {\em obliviousness property}.
For any step $i$ of the algorithm, if the agent is at some node $v$ in this step, and the algorithm prescribes taking some port $p$ at this node, then the port $q$
through which the agent enters the adjacent node $w$ in the $(i+1)$-th step, and the degree of the node $w$ are predetermined (i.e., they are independent of the starting node in the main cycle).  Intuitively, the agent does not learn anything during the algorithm execution.
Therefore, every exploration algorithm with the agent starting from any node of the main cycle can be uniquely coded by a sequence of port numbers which the agent takes in consecutive steps of its exploration.

Let $\cB$ be any exploration algorithm for $\widehat{G}$,  and suppose that the agent starts from some node of the main cycle.
We use $\cdot$ for concatenation of sequences.
\begin{lemma}\label{repr}
Let $U$ be the sequence of port numbers corresponding to the movement of the agent according to algorithm $\cB$, starting at some node of the main cycle $C$ of $\widehat{G}$ . Then $U= B'_1 \cdot (2) \cdot B_1 \cdot  (\frac{m}{2}) \cdot B'_2 \cdot (2)  \cdot B_2 \cdot (\frac{m}{2}) \cdots B'_p \cdot (2) \cdot B_p  \cdot (\frac{m}{2}) \cdot B'_{p+1}$, where each $B'_j$ is a sequence of port numbers corresponding to the movement of the agent along $C$ and each $B_j$ is a sequence of port numbers corresponding to the movement of the agent inside some $H_{x(v_i)}$.
\end{lemma}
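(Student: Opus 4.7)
The plan is to parse $U$ from left to right by tracking the current location of the agent in $\widehat{G}$, exploiting the fact that the main cycle $C$ is joined to the rest of the graph only through the $m$ bridge edges $(y_i, v'_1(i))$, each of which is identified by distinctive port numbers at its two endpoints.

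First I would catalog, from the construction, the ports available at each type of node. A cycle node $y_i$ has degree $3$, with ports $0,1$ for its two neighbors on $C$ and port $2$ for the unique bridge to $v'_1(i)$. The distinguished node $v'_1(i)$ of $H_{x(v_i)}$ has degree $\frac{m}{2}+1$, with ports $0,1,\dots,\frac{m}{2}-1$ leading to neighbors inside $H_{x(v_i)}$ and port $\frac{m}{2}$ leading back to $y_i$. Every other node of $H_{x(v_i)}$ has degree $\frac{m}{2}$ and only ports $0,\dots,\frac{m}{2}-1$, all going to internal neighbors. In particular, the agent can transition between $C$ and some $H_{x(v_i)}$ only by taking port $2$ at a node $y_i$ (entering $H_{x(v_i)}$) or port $\frac{m}{2}$ at $v'_1(i)$ (leaving $H_{x(v_i)}$); no other port at any node crosses the boundary.

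Second, I would argue by induction on the step number that at every moment the agent is either on $C$ or inside exactly one of the subgraphs $H_{x(v_i)}$, and that its ``phase'' changes only at bridge crossings. Since the agent starts on $C$, the initial ports of $U$ (possibly none) are taken during movement along $C$, and by the catalog above such ports all lie in $\{0,1\}$; call this block $B'_1$. The next symbol of $U$ after $B'_1$ is necessarily a $2$ taken at some $y_i$, immediately placing the agent at $v'_1(i)$. From that moment on, the agent stays inside $H_{x(v_i)}$ until it is again at $v'_1(i)$ and takes port $\frac{m}{2}$; the intervening port numbers form the block $B_1$. This $\frac{m}{2}$ brings the agent back to $y_i \in C$, and the argument iterates, producing the alternating decomposition stated in the lemma, with $B'_{p+1}$ the (possibly empty) final cycle block if the algorithm terminates on $C$. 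If instead the algorithm terminates strictly inside some $H_{x(v_i)}$, the same parsing applies with a final trailing $B_p$ (equivalently, one may pad with an empty $B'_{p+1}$ and drop the terminal $\frac{m}{2}$).

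There is no real obstacle here: the lemma is essentially a bookkeeping consequence of the fact that $C$ and each $H_{x(v_i)}$ meet in a single edge whose two endpoints carry uniquely identifiable port labels ($2$ and $\frac{m}{2}$), which makes the decomposition unambiguous. The only mild care required is handling boundary situations such as empty $B'_j$'s or $B_j$'s, or an algorithm that halts in the middle of a subgraph phase; these are absorbed by allowing the relevant blocks to be empty. The real content of the lemma, which will be used in subsequent arguments, is the obliviousness property this normal form makes manifest.
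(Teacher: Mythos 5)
Your proof is correct and follows essentially the same parsing argument as the paper: decompose $U$ left to right by tracking the bridge crossings via port $2$ at $y_i$ and port $\frac{m}{2}$ at $v'_1(i)$. You are somewhat more explicit than the paper in cataloging the available ports and in noting the boundary case where the algorithm halts inside a subgraph (so the trailing $(\frac{m}{2}) \cdot B'_{p+1}$ may be absent), but the underlying idea is identical.
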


\begin{proof}
Let the agent start from $y_{i_0}$. Let $y_{i_1}$ be the first node where the agent takes the port 2. Let $B'_1$ be the sequence of port numbers corresponding to the movement from $y_{i_0}$ to $y_{i_1}$. At $y_{i_1}$, the agent takes the port 2. Let $B_1$ be the sequence of port numbers corresponding to the movement of the agent after it takes the port 2  and before it takes the port $\frac{m}{2}$ at $v_1'(i_1)$. Therefore the sequence of port numbers until this moment can be written as $B'_1\cdot (2) \cdot  B_1 \cdot  (\frac{m}{2})$.
Continuing in this way, the sequence $U$ of port numbers can be written as $B'_1 \cdot (2) \cdot B_1 \cdot  (\frac{m}{2}) \cdot B'_2 \cdot (2)  \cdot B_2 \cdot (\frac{m}{2}) \cdots B'_p \cdot (2) \cdot B_p  \cdot (\frac{m}{2}) \cdot B'_{p+1}$.
 \end{proof}

%

Call an exploration algorithm of $\widehat{G}$ {\em non-repetitive}, if the agent, starting from the main cycle, enters each $H_{x(v_i)}$, for  $1 \le i \le m$, exactly once.
By definition, the sequence of port numbers corresponding to a non-repetitive algorithm can be written as
$D'_1 \cdot (2) \cdot D_1 \cdot  (\frac{m}{2}) \cdot D'_2 \cdot (2)  \cdot D_2 \cdot (\frac{m}{2}) \cdots D'_m \cdot (2) \cdot D_m  \cdot (\frac{m}{2}) \cdot D'_{m+1}$, where  each $D'_j$ is a sequence of port numbers corresponding to the movement of the agent along $C$ and each $D_j$ is a sequence of port numbers corresponding to the movement of the agent inside some $H_{x(v_i)}$. Notice that since the algorithm is non-repetitive, the number of blocks $D_j$ is exactly $m$.

 The following lemma proves that to show a lower bound on the exploration time in $\widehat{G}$, it is enough to consider only the class of non-repetitive algorithms.
\begin{lemma}\label{non-rep}
If the agent starts from some node of the main cycle of $\widehat{G}$ and executes
any exploration algorithm $\cB$ of $\widehat{G}$, then there exists a non-repetitive algorithm $\cB'$ for this agent, such that the exploration time of $\cB'$ is at most
the exploration time of $\cB$.
\end{lemma}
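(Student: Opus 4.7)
My plan is to fuse together all of $\cB$'s visits to each gadget into a single visit, producing a non-repetitive algorithm $\cB'$ whose total time is no larger. First, by Lemma~\ref{repr}, the port sequence of $\cB$ decomposes as
$U = B'_1 \cdot (2) \cdot B_1 \cdot (m/2) \cdots B'_p \cdot (2) \cdot B_p \cdot (m/2) \cdot B'_{p+1}$,
where each $B_j$ is a walk inside some gadget $H_{x(v_{i_j})}$. Crucially, each $B_j$ begins at $v'_1(i_j)$ right after the agent takes the entry port $2$ at $y_{i_j}$ and ends at $v'_1(i_j)$ right before taking the exit port $m/2$, so $B_j$ is a closed walk based at $v'_1(i_j)$. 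Because $\cB$ explores $\widehat{G}$, every gadget is entered at least once, so $\{i_1,\ldots,i_p\} = \{1,\ldots,m\}$ and in particular $p \geq m$.

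For each $i \in \{1,\ldots,m\}$, I let $E_i$ be the concatenation, in their original order of occurrence, of all blocks $B_j$ with $i_j=i$. Since each such $B_j$ is a closed walk at $v'_1(i)$ and, by the obliviousness property, the trajectory in $\widehat{G}$ is determined solely by the port sequence, $E_i$ is a legal closed walk at $v'_1(i)$; moreover its set of visited vertices is the union of the sets visited by the individual blocks, which by correctness of $\cB$ is the whole of $V(H_{x(v_i)})$. I then define $\cB'$ as follows: starting at the given node $y_{i_0}$ of the main cycle, the agent performs $E_{i_0}$ (preceded by port $2$ and followed by port $m/2$), then proceeds clockwise along $C$ using port $1$, stopping at each successive $y_i$ to perform $E_i$, until all $m$ gadgets have been visited; then it halts. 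By construction $\cB'$ is non-repetitive, visits every node of $C$ via the single clockwise tour, and visits every node of each $V(H_{x(v_i)})$ via $E_i$, so it is a valid exploration.

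For the time comparison, a direct count gives
$\mathrm{time}(\cB') = \sum_{i=1}^m |E_i| + 2m + (m-1) = \sum_{j=1}^p |B_j| + 3m - 1,$
while $\mathrm{time}(\cB) = \sum_{j=1}^p |B_j| + 2p + \sum_{j=1}^{p+1} |B'_j|$. Since $p \geq m$ and the $B'_j$'s together form a walk on the $m$-node cycle $C$ that visits every $y_i$ (hence has combined length at least $m-1$), we obtain $2p + \sum_{j} |B'_j| \geq 2m + (m-1) = 3m - 1$, so $\mathrm{time}(\cB') \leq \mathrm{time}(\cB)$.

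The main obstacle I foresee is the rigorous justification that pasting together the blocks $B_j$ sharing a common index $i_j=i$ yields a legal walk with the correct set of visited vertices. This hinges on two facts that must be invoked carefully: each such $B_j$ is a \emph{closed} walk based at $v'_1(i)$ (so the endpoints align under concatenation), and the obliviousness property of $\widehat{G}$ guarantees that the agent's trajectory is a function of the port sequence alone, independently of any intermediate state or of the starting node on $C$.
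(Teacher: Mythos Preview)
Your proof is correct and uses essentially the same core idea as the paper: for each gadget $H_{x(v_i)}$, concatenate all the blocks $B_j$ with $i_j=i$ into a single block (the paper calls it $D_i$, you call it $E_i$), which is a legal closed walk at $v'_1(i)$ by the obliviousness property and the fact that each $B_j$ is closed at that vertex.

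The only minor difference is in how the cycle portion is handled. The paper leaves the original cycle segments $B'_j$ in place and simply inserts each concatenated block $D_i$ at the location of the \emph{first} visit to gadget $i$, deleting the redundant $(2)\cdot B_j\cdot(m/2)$ chunks elsewhere; since this construction only removes $2(p-m)$ port symbols and rearranges the rest, the length comparison is immediate. You instead discard the original $B'_j$'s entirely and replace them by a single traversal of $C$ of length $m-1$; this is perfectly valid but obliges you to argue the extra inequality $\sum_j |B'_j|\ge m-1$, which you do correctly by observing that the concatenated $B'_j$'s form a walk on $C$ visiting every $y_i$. Either route works; the paper's is slightly shorter, yours makes the time accounting more explicit.
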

\begin{proof}
$U= B'_1 \cdot (2) \cdot B_1 \cdot  (\frac{m}{2}) \cdot B'_2 \cdot (2)  \cdot B_2 \cdot (\frac{m}{2}) \cdots B'_p \cdot (2) \cdot B_p  \cdot (\frac{m}{2}) \cdot B'_{p+1}$ be the sequence of port numbers corresponding to the algorithm $\cB$, represented as in Lemma \ref{repr}. For any
 $1 \le j \le m$, consider all the blocks $B_i$ corresponding to moves of the agent inside $H_{x(v_j)}$, and let  $D_j$ be the concatenation of all these blocks. Delete all these blocks together with the preceding $(2)$ and succeeding $(\frac{m}{2})$ for each block. Put $(2) \cdot D_j \cdot (\frac{m}{2}) $ at the position of the first deleted block.
 Consider the sequence $U'$, which results from this operation performed for all  $1 \le j \le m$. By definition, the sequence $U'$ of port numbers
 corresponds to a non-repetitive algorithm, and the length of $U'$ is less or equal to the length of $U$. Hence the lemma follows.
\end{proof}

The next lemma implies that the sequence $U$ corresponding to a correct non-repetitive exploration algorithm must be long.

\begin{lemma}
Let $U= D'_1 \cdot (2) \cdot D_1 \cdot  (\frac{m}{2}) \cdot D'_2 \cdot (2)  \cdot D_2 \cdot (\frac{m}{2}) \cdots D'_m \cdot (2) \cdot D_m  \cdot (\frac{m}{2}) \cdot D'_{m+1}$ be the sequence of port numbers corresponding to a non-repetitive algorithm. If there exists some $D_i$ such that the agent following $D_i$ in $H$ starting from node $v_1$  visits some node $v_j$ of $H$ at most $s$ times, then there exists a starting node in the main cycle of $\widehat{G}$, such that the agent starting at this node and following $U$ does not visit all the nodes of $\widehat{G}$.
\label{lem:lemmap1}\end{lemma}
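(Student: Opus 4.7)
My plan is to exploit the rotational symmetry of the main cycle $C$ in combination with the property of the family $\cH_X$ established in \cite{BRT}. The goal is to choose the starting node on $C$ so that the ``bad'' block $D_i$ is executed precisely inside the gadget $H_{x(v_j)}$, where the guarantee from \cite{BRT} forces a node to be missed.

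Fix a reference starting node $y_{i_0}$ on $C$, and consider the execution of $U$ from $y_{i_0}$. By the obliviousness property observed just before Lemma \ref{repr}, the sequence $U$ itself is independent of which node of $C$ the agent starts from. Combining this with non-repetitiveness, there is a permutation $\sigma$ of $\{1,\dots,m\}$ such that during block $D_k$ the agent enters the gadget $H_{x(v_{\sigma(k)})}$, and each gadget is entered by exactly one block. All $m$ nodes of $C$ are mutually indistinguishable to the agent (ports $0,1$ along the cycle, and port $2$ leading to the adjacent gadget at every node), so starting instead from $y_{i_0+t}$ (with the index taken modulo $m$) produces the same sequence $U$, but now during block $D_k$ the agent enters $H_{x(v_{\sigma(k)+t})}$.

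Next I set $t \equiv j-\sigma(i) \pmod m$. With this choice, block $D_i$ is executed inside $H_{x(v_j)}$, entering through port $\tfrac{m}{2}$ at $v'_1(j)$. By the hypothesis, following $D_i$ in $H$ from $v_1$ visits $v_j$ at most $s$ times; hence the result of \cite{BRT}, applied to $H_{x(v_j)}$, guarantees that following $D_i$ inside $H_{x(v_j)}$ from $v'_1$ leaves one of $v'_j$ or $v''_j$ never visited. By non-repetitiveness the agent enters $H_{x(v_j)}$ only during block $D_i$, so this copy of $v_j$ is not visited at any other point of the execution from $y_{i_0+t}$, and the exploration of $\widehat{G}$ therefore fails.

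The delicate point I expect to have to verify carefully is the rotation step: that shifting the starting node on $C$ by $t$ not only preserves the port sequence $U$ (via obliviousness), but also shifts the indices of the gadgets entered by the blocks $D_k$ by the same $t$ modulo $m$. This follows from the uniform way the gadgets are attached to $C$ (every $y_k$ is linked to $v'_1(k)$ via the same pair of ports $2$ and $\tfrac{m}{2}$, and the cycle itself is oriented so that all of its nodes look identical), but it is essentially the only place where the symmetric construction of $\widehat{G}$ is used.
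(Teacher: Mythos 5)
Your proposal is correct and follows essentially the same approach as the paper's proof: choose the starting node on the main cycle so that the block $D_i$ is executed inside $H_{x(v_j)}$, invoke the property of $H_{x(v_j)}$ from \cite{BRT} to conclude that one copy of $v_j$ is never visited, and use non-repetitiveness to rule out any later visit to that gadget. You are simply more explicit than the paper about why such a starting node exists — the paper asserts the choice can be made, while you justify it by the rotational symmetry of the oriented cycle $C$ together with the obliviousness property — but the underlying argument is the same.
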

\begin{proof}
 Suppose that there exists some $D_i$ such that the agent following $D_i$ in $H$ starting from node $v_1$  visits some node $v_j$ of $H$ at most $s$ times. Choose the starting node of the agent in the main cycle of $\widehat{G}$,  so that the part of its trajectory corresponding to  $D_i$ visits $H_{x(v_j)}$. Then by the property of $H_{x(v_j)}$, at least one copy of $v_j$ in
$H_{x(v_j)}$ will not be explored by the agent. Since $H_{x(v_j)}$ is  visited by the agent only when it follows $D_i$, some node in $\widehat{G}$  is not explored by the agent following~$U$.
\end{proof}

\begin{theorem}
Any exploration algorithm using any advice given by a map oracle must take time $\Omega(n^2)$ on graph $\widehat{G}$, for some starting node in the main cycle,
for arbitrarily large $n$.
\label{th:thmap}\end{theorem}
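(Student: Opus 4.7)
The plan is to combine Lemmas~\ref{non-rep} and~\ref{lem:lemmap1} into a uniform lower bound on the port sequence of any algorithm that correctly explores $\widehat{G}$ from every starting node in the main cycle $C$. By the obliviousness property noted in the excerpt, any such algorithm $\cB$ is encoded by a single sequence $U$ of port numbers, and its exploration time equals $|U|$ independently of the starting node; the goal is therefore to prove $|U| = \Omega(n^2)$. My first step would be to apply Lemma~\ref{non-rep} to reduce to the non-repetitive case, producing a sequence $U'$ of length at most $|U|$ of the form $U' = D'_1 \cdot (2) \cdot D_1 \cdot (\frac{m}{2}) \cdots D'_m \cdot (2) \cdot D_m \cdot (\frac{m}{2}) \cdot D'_{m+1}$, with exactly $m$ internal blocks $D_1, \dots, D_m$, and to verify that the resulting non-repetitive algorithm $\cB'$ is still correct from every starting node of $C$. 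This verification uses the fact that each $D_j$ is, by construction, the concatenation (in order of appearance) of all excursions of $\cB$ into some fixed gadget $H_{x(v_j)}$ from a chosen reference starting point $y_0$; since varying the starting node in $C$ merely cyclically shifts which gadget each excursion of $\cB$ enters, but preserves both the relative order of those excursions and their port sequences, $D_j$ executed from any other starting node coincides, as a walk from $v'_1$, with the cumulative trajectory of $\cB$ in the corresponding shifted gadget and hence fully explores it.

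Next I would apply the contrapositive of Lemma~\ref{lem:lemmap1}. For any index $i$ and any node $v_j$ of $H$, one can pick a starting node in $C$ so that the $i$-th block $D_i$ is executed inside $H_{x(v_j)}$; correctness of $\cB'$ from that starting node then forces, via Lemma~\ref{lem:lemmap1}, the walk induced by $D_i$ in $H$ started from $v_1$ to visit $v_j$ more than $s$ times. Taking this for every $j \in \{1, \dots, m\}$ gives $|D_i| + 1 \ge m(s+1)$, that is, $|D_i| \ge m(s+1) - 1$ for every $i$.

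Summing over the $m$ blocks, $|U| \ge |U'| \ge \sum_{i=1}^m |D_i| \ge m\bigl(m(s+1) - 1\bigr) = \Theta(m^2 s) = \Theta(m^4)$, using $s = m^2/4 - m + 1 = \Theta(m^2)$; since $n = 2m^2 + m$ gives $m = \Theta(\sqrt{n})$, this yields $|U| = \Omega(n^2)$, which proves the theorem for every sufficiently large $m$ and hence for arbitrarily large $n$. The main obstacle I anticipate is the verification in the first step that Lemma~\ref{non-rep}'s non-repetitive transformation, which is naturally described with respect to one starting point, in fact preserves correctness from every starting point in $C$; without this strengthening, one could only deduce that each $D_i$ visits a single particular node of $H$ more than $s$ times, yielding the weaker $\Omega(m^3) = \Omega(n^{3/2})$ bound.
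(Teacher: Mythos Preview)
Your proposal is correct and follows the paper's approach exactly: reduce via Lemma~\ref{non-rep} to a non-repetitive sequence, then apply (the contrapositive of) Lemma~\ref{lem:lemmap1} to force each block $D_i$ to have length at least $m(s+1)$, and sum. The subtlety you flag---that Lemma~\ref{non-rep} is stated for a single starting node while Lemma~\ref{lem:lemmap1} needs correctness from every starting node---is real and the paper glosses over it; your cyclic-shift argument (that shifting the start by $d$ sends each concatenated block $D_j$ into gadget $j+d$, where it coincides with the cumulative trajectory of the original algorithm $\cB$ in that gadget from the shifted start) is exactly the right way to close this gap.
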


\begin{proof}
By Lemma \ref{non-rep}, it is enough to consider only non-repetitive algorithms.
Let $U= D'_1 \cdot (2) \cdot D_1 \cdot  (\frac{m}{2}) \cdot D'_2 \cdot (2)  \cdot D_2 \cdot (\frac{m}{2}) \cdots D'_m \cdot (2) \cdot D_m  \cdot (\frac{m}{2}) \cdot D'_{m+1}$ be the sequence of port numbers  corresponding to such an algorithm. Then by Lemma \ref{lem:lemmap1},
for each $i$, $1 \le i \le m$, the agent following $D_i$ in $H$ starting from node $v_1$ visits each node $v_j$ of $H$, for $1 \le j \le m$, at least $s+1$ times.
Therefore the length of $D_i$ is at least $(s+1)m$. Hence,  the length of $U$ is at least $\sum_{i=1}^m (s+1)m= (s+1)m^2=m^2(\frac{m^2}{4} -m +1)$. Since $n=2m^2+m$, the length of $U$ is in $\Omega(n^2)$.
\end{proof}

%

Theorem \ref{th:thmap} shows that, for some $n$-node graph,  no advice given by a map oracle can help to explore this graph in time better than $\Theta(n^2)$. It is then natural to ask what is the minimum size of advice
 to achieve time $\Theta(n^2)$ with a map oracle, for every $n$-node graph. Our next result shows that any exploration algorithm using advice of size $n^{\delta}$ for $\delta < \frac{1}{3}$, must take time $\omega(n^2)$, on some $n$-node graph.

Fix a constant $\epsilon <\frac{1}{2}$.
Let $H$ be an $\frac{m}{2}$-regular graph with $m$ nodes, where $m$ is even. Let $\{v_1,\cdots ,v_m\}$ be the set of nodes of $H$. Consider a subset $Z \subset \{ 1,2, \cdots, m\}$ of size $m^{\epsilon}$. Let $p=m^{\epsilon}$ and $n=2mp+p$.
We construct an $n$-node graph $\widehat{G_Z}$ from $H$. The construction of $\widehat{G_Z}$ is similar to the construction of $\widehat{G}$ at the beginning of this section.
Let $Z=\{z_1,z_2, \cdots, z_p\}$. To construct $\widehat{G_Z}$, connect the (previously described) graphs $H_{x(v_{z_i})}$, for $1 \le i \le p$, and an oriented cycle $C'$ 
(called the main cycle) with nodes $\{y_1, \cdots, y_{p}\}$, by adding edges $(y_i, v'_1(z_i))$, for $1 \le i \le p$. The  port numbers corresponding to these edges are:
2 at $y_i$   and $\frac{m}{2}$ at $v'_1(z_i)$. Note that the same obliviousness property applies to exploration algorithms in graphs $\widehat{G_Z}$, when the agent
starts from a node of the main cycle.

Let $\widehat{\cG_Z}$ be the set of all possible graphs $\widehat{G_Z}$ constructed from $H$. We have $|\widehat{\cG_Z}|={m \choose p}$.


\begin{theorem}\label{exceed}
For any $\epsilon < \frac{1}{2}$, any exploration algorithm  using advice of size $o(n^{\frac{\epsilon}{1+\epsilon}} \log n)$ must take time $\omega(n^2)$ on some  graph
of the class $\widehat{\cG_Z}$ and for some starting node in the main cycle of this graph, for arbitrarily large $n$.
\end{theorem}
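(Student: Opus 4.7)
The plan is to adapt the analysis of Theorem \ref{th:thmap} and add a pigeonhole step over advice strings. Fix an arbitrary constant $c>0$ and assume, towards a contradiction, that there is a map-oracle exploration algorithm using advice of size at most $a$ that explores every graph of $\widehat{\cG_Z}$ starting from every node of the main cycle in time at most $L:=cn^2$. I will derive the contradiction already from a single starting node, say $y_1$. The obliviousness property (which, as the authors observe, also applies to graphs of $\widehat{\cG_Z}$) implies that for each advice string $A$ the port sequence $U_A$ executed from $y_1$ depends only on $A$, not on the particular $Z$. Invoking the $\widehat{\cG_Z}$-analogue of Lemma \ref{non-rep} I may assume $U_A$ is non-repetitive, so it decomposes as $U_A=D'_1\cdot(2)\cdot D_1\cdot(\tfrac{m}{2})\cdot D'_2\cdots D_p\cdot(\tfrac{m}{2})\cdot D'_{p+1}$, where the cycle-blocks $D'_j$ determine a permutation $\sigma_A$ of $\{1,\dots,p\}$: the $i$-th inner block $D_i$ is executed inside the attached subgraph $H_{x(v_{z_{\sigma_A(i)}})}$.

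Define $R_i^A:=\{\,j\in\{1,\ldots,m\}:\text{following }D_i\text{ in }H\text{ from }v_1\text{ visits }v_j\text{ more than }s\text{ times}\,\}$. By the $\widehat{\cG_Z}$-analogue of Lemma \ref{lem:lemmap1}, if $U_A$ correctly explores $\widehat{G_Z}$ from $y_1$ then $z_{\sigma_A(i)}\in R_i^A$ for every $i$. Consequently the number of $p$-subsets $Z$ that advice $A$ can handle is at most the number of ordered tuples $(z_1,\ldots,z_p)$ with $z_{\sigma_A(i)}\in R_i^A$, that is, at most $\prod_{i=1}^{p}|R_i^A|$. Since every node counted in $R_i^A$ requires at least $s+1$ visits by $D_i$, we have $|R_i^A|\le(|D_i|+1)/(s+1)$, and summing gives $\sum_i|R_i^A|\le(L+p)/(s+1)$. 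With $n=2mp+p=\Theta(m^{1+\epsilon})$, $L=cn^2$, $s=\Theta(m^2)$ and $p=m^\epsilon$, this sum is $O(m^{2\epsilon})$, so AM-GM yields
\[
\prod_{i=1}^{p}|R_i^A|\le\left(\frac{L+p}{p(s+1)}\right)^{p}=(O(m^\epsilon))^{m^\epsilon}=2^{\epsilon\,m^\epsilon\log m+O(m^\epsilon)}.
\]

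Pigeonholing over the at most $2^{a+1}$ advice strings, the total number of $Z$'s the algorithm can handle from $y_1$ is at most $2^{a+1}\cdot 2^{\epsilon\,m^\epsilon\log m+O(m^\epsilon)}$, whereas $|\widehat{\cG_Z}|=\binom{m}{p}\ge(m/p)^{p}=2^{(1-\epsilon)m^\epsilon\log m}$. Because $n^{\epsilon/(1+\epsilon)}\log n=\Theta(m^\epsilon\log m)$, the hypothesis $a=o(n^{\epsilon/(1+\epsilon)}\log n)$ translates into $a=o(m^\epsilon\log m)$; together with $\epsilon<\tfrac12$ (equivalently $1-\epsilon>\epsilon$) this makes $2^{a+1}\cdot 2^{\epsilon\,m^\epsilon\log m+O(m^\epsilon)}<\binom{m}{p}$ for all sufficiently large $m$, contradicting coverage. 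Hence some $\widehat{G_Z}\in\widehat{\cG_Z}$ forces the algorithm to use time strictly more than $cn^2$ from $y_1$; since $c$ is arbitrary and the argument applies for arbitrarily large $m$ (hence arbitrarily large $n$), the exploration time must be $\omega(n^2)$ on some graph of $\widehat{\cG_Z}$ for arbitrarily large $n$.

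The hard part will be to justify that the obliviousness property together with the $\widehat{\cG_Z}$-analogues of Lemmas \ref{non-rep} and \ref{lem:lemmap1} really carry over: one must check that a fixed advice produces a fixed port sequence (independent of which $Z$ is attached to the main cycle), that the non-repetitive reduction does not increase $|U_A|$, and that for each starting cycle node the correspondence between inner blocks and attached $H_{x(v_{z_k})}$'s is well defined via a permutation. Once these routine adaptations are in place, the quantitative part collapses to the AM-GM bound on $\prod_i|R_i^A|$ plus a pigeonhole over advice strings, and the threshold $\epsilon<\tfrac12$ enters in exactly one place, through the inequality $1-\epsilon>\epsilon$ that provides the margin needed to beat $\binom{m}{p}$.
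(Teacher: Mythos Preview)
Your argument is correct and takes a genuinely different route from the paper's proof.

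The paper pigeonholes \emph{first}: it fixes an advice string, collects the family $\widehat{\cG}$ of all $\widehat{G_Z}$ receiving that advice, defines the ``footprint'' $F(\widehat{\cG})=\bigcup_{\widehat{G_Z}\in\widehat{\cG}}\{v_{z_1},\dots,v_{z_p}\}$, and shows $|F(\widehat{\cG})|\ge|\widehat{\cG}|^{1/p}$ by a counting argument. It then argues (using the freedom to shift the starting node along the main cycle) that every inner block $D_i$ must visit every $v\in F(\widehat{\cG})$ at least $s+1$ times, which forces $|U|\ge p(s+1)|F(\widehat{\cG})|$ and yields $\omega(n^2)$ after a computation involving an auxiliary integer parameter~$c$ with $\epsilon<\frac{c-1}{2c-1}$.

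You invert the order: you fix the starting node $y_1$, upper-bound for each advice $A$ the number of $Z$'s it can handle by $\prod_i|R_i^A|$, and control this product via AM--GM using the total-length budget $\sum_i|D_i|\le L$. Only then do you pigeonhole over the $2^{a+1}$ advice strings. This buys two things: (i) it avoids the footprint set $F(\widehat{\cG})$ and the parameter $c$ entirely, making the role of the hypothesis $\epsilon<\tfrac12$ (namely $1-\epsilon>\epsilon$) completely transparent; and (ii) it actually proves the slightly stronger conclusion that already the fixed starting node $y_1$ witnesses the $\omega(n^2)$ lower bound, whereas the paper's argument genuinely uses the freedom of choosing the starting node to align a given $D_i$ with a given attached $H_{x(v_j)}$. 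Conversely, the paper's approach gives an explicit uniform time lower bound for \emph{every} graph in the large family $\widehat{\cG}$, not just an existential one.

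Your caveats about carrying over obliviousness and Lemmas~\ref{non-rep}--\ref{lem:lemmap1} are appropriate; the key observation (which you note) is that the non-repetitive reduction depends only on which cycle position each block enters, and this is determined by $U_A$ alone, so the reduction is uniform across all $Z$.
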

\begin{proof}
Since $\epsilon < \frac{1}{2}$, there exists an integer  $c$ such that $\epsilon < \frac{c-1}{2c-1}$.
We show that if the size of the advice is at most $\frac{1}{2c}m^{\epsilon} \log (m^{1-\epsilon})\leq \frac{1-\epsilon}{2c(1+\epsilon)}(\frac{n}{2})^{\frac{\epsilon}{1+\epsilon}} \log \frac{n}{2}$, then there exists a graph in $\widehat{\cG_Z}$ for which the time required for exploration is $\omega(n^2)$. We have $|\widehat{\cG_Z}|={m \choose {m^{\epsilon}}} \ge ({m^{1-\epsilon}})^{m^{\epsilon}}$. There are fewer than $({m^{1-\epsilon}})^{\frac{m^{\epsilon}}{c}}$ different binary strings of length at most $\frac{1}{2c}m^{\epsilon} \log (m^{1-\epsilon})$. By the pigeonhole principle, there exists a family of graphs $\widehat{\cG} \subset \widehat{\cG_Z}$ of size at least $({m^{1-\epsilon}})^{(c-1)\frac{m^{\epsilon}}{c}}$ such that all the graphs in $\widehat{\cG}$ get the same advice.

Define $F(\widehat{\cG})=\bigcup \left\{\{v_{z_1}, v_{z_2}, \cdots ,v_{z_p}\} : Z=\{z_1,z_2, \cdots ,z_p\} ~ {\rm and}~ \widehat{G_Z} \in \widehat{\cG} \right\}$. Intuitively, $F(\widehat{\cG})$ is the subset of nodes of $H$, such that for each $v \in F(\widehat{\cG})$, there exists some graph in $\widehat{\cG}$ that contains $H_{x(v)}$ as a subgraph.

{\bf Claim:} $|F(\widehat{\cG})| \ge {|\widehat{\cG}|}^{\frac{1}{p}}$.

We prove the claim by contradiction. Suppose that $|F(\widehat{\cG})| < {|\widehat{\cG}|}^{\frac{1}{p}}$.
Each graph in $\widehat{\cG}$ has $p$ different subgraphs $H_{x(v)}$, where $v \in |F(\widehat{\cG})|$. There are ${|F(\widehat{\cG})| \choose p}$ different graphs in $\widehat{\cG}$ which is at most $|F(\widehat{\cG})|^p < |\widehat{\cG}|$. This  contradiction proves the claim.

Consider the exploration of some graph $\widehat{G_Z} \in \widehat{\cG}$ starting from the main cycle. Let $U= D'_1 \cdot (2) \cdot D_1 \cdot  (\frac{m}{2}) \cdot D'_2 \cdot (2)  \cdot D_2 \cdot (\frac{m}{2}) \cdots D'_p \cdot (2) \cdot D_m  \cdot (\frac{m}{2}) \cdot D'_{p+1}$ be the sequence of port numbers  corresponding to  a non-repetitive algorithm exploring $\widehat{G_Z}$. Then for each $i$, $1 \le i\le p$, the agent following $D_i$ in $H$ starting from node $v_1$ must visit each node $v \in F(\widehat{\cG})$  at least $s+1$ times.
(Otherwise, there would exist a graph in $\widehat{\cG}$ and a starting node in the main cycle, for which one node would not be explored by $U$). Hence, for sufficiently large $m$, the length of $D_i$ is at least $(s+1)|F(\widehat{\cG})| \ge \frac{m^2}{5} m^{\frac{(c-1)(1-\epsilon)}{c}}$, because $s \geq  \frac{m^2}{5}$. Therefore, the length of $U$ is at least $p \frac{m^2}{5} m^{\frac{(c-1)(1-\epsilon)}{c}}=\frac{1}{5} m^{\epsilon} m^2 m^{\frac{(c-1)(1-\epsilon)}{c}}=\frac{1}{5}m^{2+\epsilon+\frac{(c-1)(1-\epsilon)}{c}}=\frac{1}{5}m^{2+2\epsilon+(\frac{c-1}{c}+\frac{1-2c}{c}\epsilon)}$. Since $\epsilon < \frac{c-1}{2c-1}$, we have $(\frac{c-1}{c}+\frac{1-2c}{c}\epsilon)>0$. Therefore, the length of $U$ is in $\omega(m^{2+2\epsilon})=\omega(n^2)$, and hence exploration time is in $\omega(n^2)$.
\end{proof}

Since $\epsilon <\frac{1}{2}$ implies $\frac{\epsilon}{1+\epsilon} <\frac{1}{3}$, Theorem \ref{exceed} yields the following corollary.

\begin{corollary}
For any $\delta <\frac{1}{3}$, any exploration algorithm  using advice of size $o(n^{\delta})$ must take time $\omega(n^2)$ on some $n$-node graph,
for arbitrarily large $n$.
\end{corollary}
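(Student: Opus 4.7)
The plan is to deduce the corollary as an immediate consequence of Theorem~\ref{exceed} by choosing the parameter $\epsilon$ appropriately in terms of $\delta$. The key observation is that the function $f(\epsilon) = \epsilon/(1+\epsilon)$ is strictly increasing on $[0,\infty)$ and satisfies $f(1/2) = 1/3$. So the range of values $\epsilon/(1+\epsilon)$ achievable under the constraint $\epsilon < 1/2$ is precisely the interval $[0, 1/3)$, which is exactly the range of $\delta$ the corollary covers.

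First I would fix an arbitrary $\delta < 1/3$ and choose an intermediate value $\delta'$ with $\delta < \delta' < 1/3$; then set $\epsilon := \delta'/(1-\delta')$. A short calculation shows that $\epsilon < 1/2$ (since $\delta' < 1/3$) and $\epsilon/(1+\epsilon) = \delta'$. Thus $\epsilon$ is a legitimate choice for invoking Theorem~\ref{exceed}.

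Next I would verify the inclusion of advice-size classes: since $\delta < \delta' = \epsilon/(1+\epsilon)$, we have $n^\delta = o(n^{\epsilon/(1+\epsilon)})$, and a fortiori $n^\delta = o(n^{\epsilon/(1+\epsilon)} \log n)$. Hence any advice string of size $o(n^\delta)$ is also of size $o(n^{\epsilon/(1+\epsilon)} \log n)$. Applying Theorem~\ref{exceed} to the chosen $\epsilon$ then yields a graph in $\widehat{\cG_Z}$ (with $n$ arbitrarily large) and a starting node in its main cycle for which the exploration algorithm requires time $\omega(n^2)$, which is exactly the conclusion of the corollary.

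I don't anticipate any genuine obstacle here: the proof is essentially a calibration argument. The only mild subtlety is the need to pick the intermediate $\delta'$ strictly between $\delta$ and $1/3$ rather than setting $\epsilon = \delta/(1-\delta)$ directly — this creates the strict inequality needed to absorb the factor $\log n$ in Theorem~\ref{exceed}'s bound into the polynomial gap between $n^\delta$ and $n^{\delta'}$.
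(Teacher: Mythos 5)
Your proof is correct and follows the paper's own one-line derivation: choose $\epsilon<1/2$ so that $\epsilon/(1+\epsilon)$ equals the desired exponent, using the fact that this map sends $(0,1/2)$ bijectively onto $(0,1/3)$. One small note: the intermediate $\delta'$ is unnecessary, and the rationale you give for it is backwards --- the factor $\log n$ in Theorem~\ref{exceed} appears in the \emph{hypothesis} and only makes it easier to satisfy (since $o(n^\delta)\subseteq o(n^\delta\log n)$ automatically), so setting $\epsilon=\delta/(1-\delta)$ directly already suffices and there is no factor to ``absorb.''
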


\subsection{Instance oracle}\label{sec:instance}

For the instance oracle we show a general lower bound on the size of advice needed to achieve a given exploration time. The main corollaries of this lower bound are:
\begin{itemize}
\item
the size of advice $\Theta(n\log n)$ from Proposition \ref{ub}, sufficient to achieve linear exploration time, cannot be beaten;
\item
for advice of linear size, exploration time must be quadratic.
\end{itemize}

To prove our lower bound we will use the following construction.

Let $G$ be an $\frac{n}{4}$-regular $\frac{n}{2}$-node graph, where $n$ is divisible by 4. We can use, for example, the complete bipartite graph with $\frac{n}{2}$ nodes.
Let $m=\frac{n}{2}$. Let $v_1$, $v_2$, $\cdots$, $v_m$ be the nodes of $G$.
Let $x=(x_1,x_2, \cdots, x_m)$ be a sequence of $m$ integers where $0 \le x_i \le \frac{m}{2}-1$, for $i=1,\cdots,m$. Let $X$ be the set of all such sequences.

\begin{figure}
\begin{subfigure}{.5\textwidth}
  \centering
  \includegraphics[width=.6\linewidth]{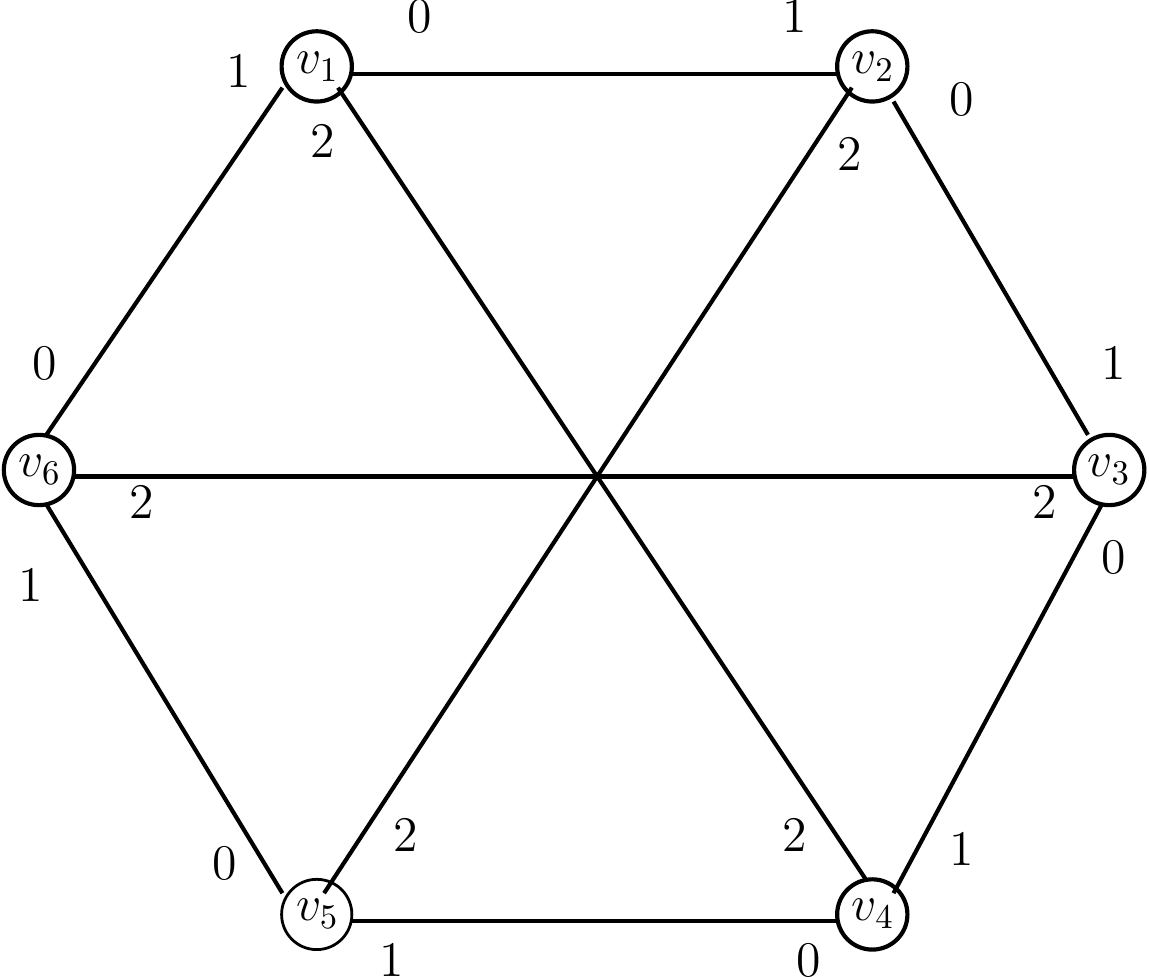}
  \caption{An example of $G$ with six nodes}
\end{subfigure}%
\hspace{-1cm}
\begin{subfigure}{.5\textwidth}
  \centering
  \includegraphics[width=1\linewidth]{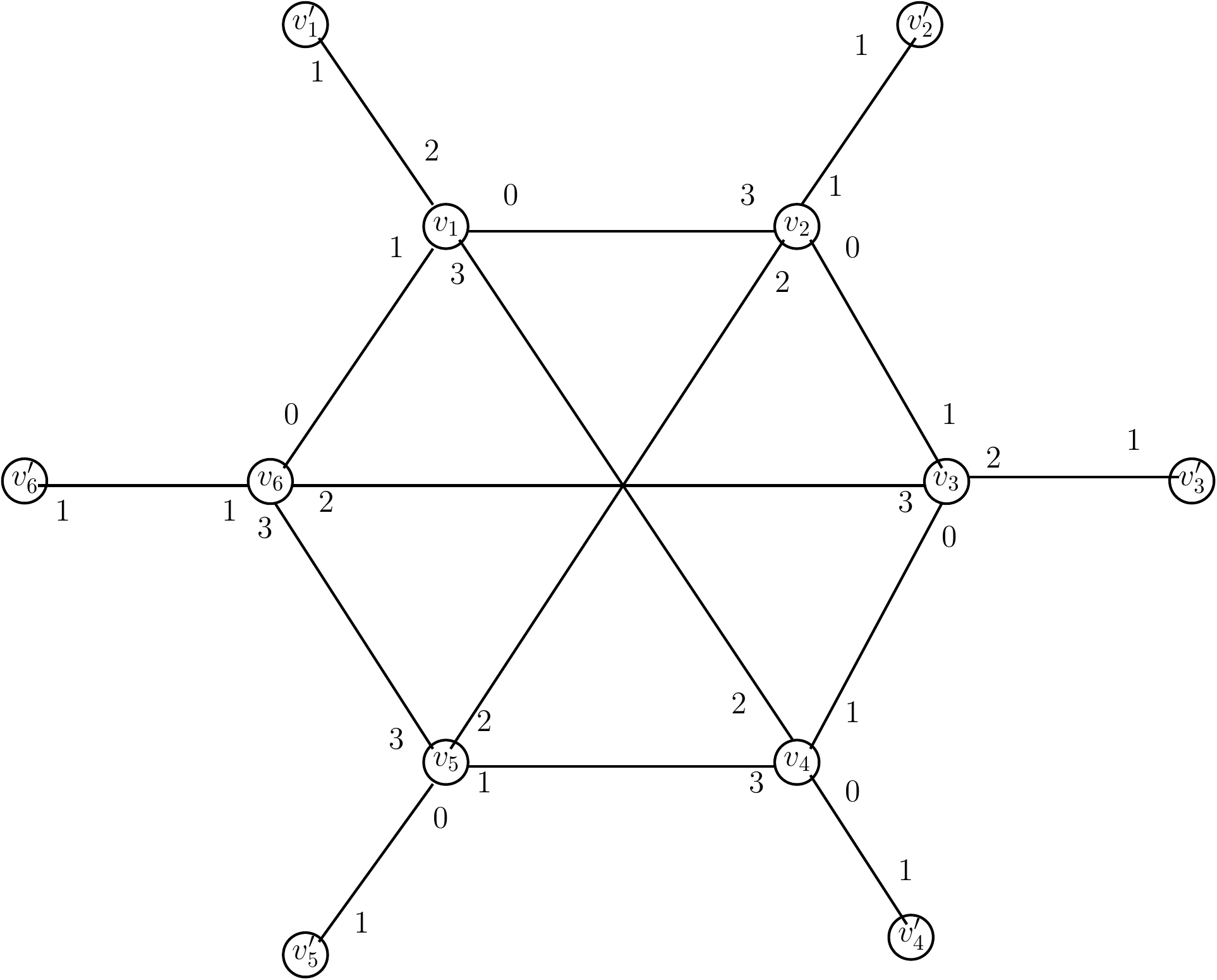}
  \caption{$G_x$ for the sequence $x=(2,1,2,0,0,1)$}
\end{subfigure}
\caption{The construction of $G_x$ from $G$}
\label{fig:fig1}
\end{figure}

 We construct an $n$-node graph $G_x$ as follows. For each $i=1,\cdots,m$, add a new node $v_i'$ of degree 1 to $G$. Replace the port number $x_i$ at $v_i$ by port number $\frac{m}{2}$. Add the edge $(v_i,v_i')$ with the port number $x_i$ at $v_i$. An example of the construction of $G_x$ from $G$ is shown in Fig. \ref{fig:fig1}. Let $\cG_X$ be the set of all possible graphs $G_x$ constructed from $G$.

\begin{theorem}\label{th:instance1}
For any function $\phi: \mathbb{N} \longrightarrow \mathbb{N}$, and for any exploration algorithm using advice of size $o(n\phi(n))$,
this algorithm must take time $\Omega(\frac{n^2}{2^{\phi(n)}})$ on some $n$-node graph from the family $\cG_X$, for arbitrarily large integers $n$.
\end{theorem}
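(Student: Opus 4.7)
The plan is to prove Theorem \ref{th:instance1} via a three-stage argument: a pigeonhole step on the advice, a reduction of exploration time to per-node port-try counts, and an AM--GM inequality on the projections of the shared-advice subfamily. First, fix the starting node to be $v_1$. The family $\cG_X$ contains $|X| = (m/2)^m$ instances $(G_x, v_1)$, and advice of size $a = o(n\phi(n))$ admits at most $2^a$ distinct strings, so some advice string $s$ is shared by a subfamily $\mathcal{X}_s \subseteq X$ of size at least $N := (m/2)^m / 2^a$.

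Next I would reduce exploration time to port-try counts. Fix the deterministic algorithm $A$ paired with advice $s$. For each $x \in \mathcal{X}_s$ and each $i \in \{1, \ldots, m\}$, let $k_i(x)$ denote the number of distinct ports attempted at $v_i$ in the trajectory of $A$ on $G_x$ up to and including the first time the agent takes port $x_i$. Since the pendant $v_i'$ must be visited and is reachable only by taking port $x_i$ at $v_i$, each of these $k_i(x)$ attempts is a distinct outgoing traversal from $v_i$, so $T(x) \geq \sum_{i=1}^m k_i(x)$. Set $Y_i := \{x_i : x \in \mathcal{X}_s\}$, the projection of $\mathcal{X}_s$ onto the $i$-th coordinate. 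Then $\mathcal{X}_s \subseteq \prod_i Y_i$ and in particular $\prod_i |Y_i| \geq |\mathcal{X}_s| \geq N$.

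I would then argue that $\mathbb{E}_{x \in \mathcal{X}_s}[k_i(x)] = \Omega(|Y_i|)$. The base graph $G$ (for instance, the complete bipartite graph with a port labeling in which the port used to return to $v_i$ from any of its neighbours in $G$ is a fixed function of $v_i$ alone) is chosen so that no observation at a node $v_j$ with $j \neq i$ reveals information about $x_i$ and every arrival at $v_i$ itself carries at most one bit about $x_i$. Consequently, conditional on the prefix of observations before the agent's first port attempt at $v_i$, the value $x_i$ remains almost uniform on $Y_i$ while the agent's order of attempts at $v_i$ is independent of $x_i$. Summing over $i$ and selecting the best $x^{*} \in \mathcal{X}_s$ yields $T(x^{*}) = \Omega(\sum_i |Y_i|)$. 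Applying the AM--GM inequality under the constraint $\prod_i |Y_i| \geq N$ gives
\[
\sum_{i=1}^m |Y_i| \;\geq\; m \cdot N^{1/m} \;=\; m \cdot \frac{m}{2} \cdot 2^{-a/m} \;=\; \frac{m^2}{2} \cdot 2^{-a/m}.
\]
Substituting $m = n/2$ gives $a/m = 2a/n = o(\phi(n))$, so $2^{-a/m} \geq 2^{-\phi(n)}$ for all sufficiently large $n$, whence $T(x^{*}) = \Omega(n^2 / 2^{\phi(n)})$ on some $G_{x^{*}} \in \cG_X$, which is what the theorem asserts.

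The hard part will be justifying the middle step, that the agent's order of port attempts at $v_i$ is effectively independent of $x_i$ given the advice. This requires both a careful choice of the port labeling of $G$ and a careful accounting of the $O(1)$ bits about $x_i$ that an arrival at $v_i$ from an adjacent node can provide, which only shrinks $|Y_i|$ by a constant factor and hence preserves the $\Omega(|Y_i|)$ bound. If $\mathcal{X}_s$ fails to be a product set, the argument may additionally need to be localised to a sub-rectangle of $\mathcal{X}_s$ on which the required independence genuinely holds; this sub-rectangle can be extracted by an additional pigeonhole over the coordinates, at the cost of only a constant factor in the final bound.
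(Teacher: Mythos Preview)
Your approach diverges from the paper's. After the same pigeonhole step, the paper uses neither AM--GM nor any averaging over $\mathcal X_s$. Instead it fixes a threshold $z = m/2^{\phi(2m)+2}$, sets $J = \{\,j : |Y_j| \ge z\,\}$, and shows by a direct counting contradiction that $|J| > m/2$: if $|J|\le m/2$ then $|\cG| \le (m/2)^{|J|}\, z^{m-|J|} \le (m/2)^{m/2} z^{m/2}$, which is smaller than the pigeonhole lower bound on $|\cG|$. It then argues that on some single $G_x \in \cG$ the agent must take every port in $Y_j$ at $v_j$ for each $j\in J$ (because omitting some $p\in Y_j$ would leave $v_j'$ unvisited on a graph $G_{x'}\in\cG$ with $x'_j=p$), so the time is at least $|J|\cdot z > m^2/2^{\phi(2m)+3}$. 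This replaces your averaged per-coordinate claim $\mathbb E[k_i]=\Omega(|Y_i|)$ by one pointwise assertion, with no independence reasoning at all.

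The ``hard part'' you flag is a genuine gap, and the suggested patches do not close it. First, $\mathcal X_s$ need not be a product set: conditioned on $x_{-i}$, the coordinate $x_i$ ranges only over the fibre $\{x_i : (x_i,x_{-i})\in\mathcal X_s\}$, which may be far smaller than $Y_i$, so even a perfectly $x_i$-independent attempt order would give only $\mathbb E[k_i\mid x_{-i}]=\Omega(\text{fibre size})$, not $\Omega(|Y_i|)$; and an arbitrary subset of $[m/2]^m$ of the required density need not contain any large combinatorial rectangle, so ``extract a sub-rectangle at constant-factor cost'' is false in general. Second, the one-bit-per-arrival bound does not control the cumulative leak: the agent may revisit $v_i$ many times between attempts, and a departure via port $m/2$ places it at the $x_i$-th neighbour in $G$, after which the entire subsequent transcript depends on $x_i$ through the identity of the current node. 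To make your route rigorous you would need a coupling showing that two instances in $\mathcal X_s$ agreeing off coordinate $i$ produce identical transcripts until $\Omega(|Y_i|)$ probes of $v_i$ have occurred; the paper's threshold argument bypasses this machinery entirely.
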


\begin{proof}
Let $n$ be divisible by 4.
We show that if the size of the advice is at most $\frac{n\phi(n)}{4}-1$, then there exists an $n$-node graph in the family  $\cG_X$,  for which the time required for exploration is $\Omega(\frac{n^2}{2^{\phi(n)}})$.
We have $|\cG_X|=|X|= (\frac{m}{2})^m$. There are fewer than $2^{\frac{m\phi(2m)}{2}}=( 2^{\phi(2m)})^{\frac{m}{2}}$ different binary strings of length at most $\frac{2m\phi(2m)}{4}-1=\frac{n\phi(n)}{4}-1$. By the pigeonhole principle, there exists a family of graphs $\cG \subset \cG_X$, of size at least $\frac{(\frac{m}{2})^m}{{(2^{\phi(2m)})}^{\frac{m}{2}}}$, such that all the graphs in $\cG$ get the same advice. Let $Y=\{x \in X : G_x \in \cG\}$.
Let $z=\frac{m}{2^{\phi(2m)+2}}$ and let  $J= \{j : |  \{x_j: x \in Y\}| \ge z\}$. Intuitively,  $J$ is the set of indices, for which the set of terms of sequences $x$ that produce graphs from $\cG$ is large. Let $p=|J|$.

\noindent{\bf Claim:} $p > \frac{m}{2}$.

We prove the claim by contradiction. Suppose that $p \le \frac{m}{2}$.
Since $p\leq \frac{m}{2}$ and $z<\frac{m}{2}$, we have $(\frac{m}{2})^p \cdot z^{m-p}\leq (\frac{m}{2})^{\frac{m}{2}} \cdot z^{\frac{m}{2}} $.
Note that for all $i \in \{1, 2,\cdots, m\} \setminus J$, we have $| \{x_i: x \in Y\}| <z$, and for $j \in J$, we have $ | \{x_j: x \in Y\}|  \le \frac{m}{2}$.
Therefore, $|\cG| < (\frac{m}{2})^p \cdot z^{m-p}$.
Hence, $|\cG| < (\frac{m}{2})^{\frac{m}{2}} (\frac{m}{2^{\phi(2m)+2}})^{\frac{m}{2}} = \frac{m^m}{2^{\frac{m}{2}} {(2^{\phi(2m)+2})}^{\frac{m}{2}}} < |\cG|$, which is a contradiction. This proves the claim.

Consider any  exploration algorithm for the class $\cG$. There exists a graph $G_x \in \cG$, such that, at each node $v_j$ of $G_x$, for $j \in J$, the agent must take all the ports in $\{x_j: x \in Y\} $.
Indeed, suppose that the agent does not take some port  $x_j$, where $j\in J$ and $x \in Y$.
Consider the exploration of any graph $G_{x'} \in \cG$, where $x'_j=x_j$. Since the agent can visit $v_j'$ only coming from $v_j$, using port $x'_j$ in $G_{x'}$, the node $v_j'$ remains unexplored, as the port $x'_j$ at $v_j$ is never used, which is a contradiction.
Hence, the agent must visit at least $\frac{m}{2^{\phi(2m)+2}}$ ports at each node $v_j$ for $j \in J$. Since $|J| > \frac{m}{2}$, the time required for exploration is at least $\frac{m^2}{2^{\phi(2m)+3}}$, i.e., it is at least $\frac{n^2}{2^{\phi(n)+5}}$.
\end{proof}

If $\phi(n)=c$ where $c$ is a constant, then Theorem \ref{th:instance1} implies that any exploration algorithm using advice of size at most $\frac{cn}{2}$, must take time at least $\frac{n^2}{2^{c+3}}$. This implies that, if the size of advice is at most $c'n$, for any constant $c'$, then exploration time is $\Omega(n^2)$. Hence we have the following corollary.
\begin{corollary}
Any exploration algorithm using advice of size $O(n)$ must take time $\Omega(n^2)$ on some $n$-node graph, for arbitrarily large $n$.
\end{corollary}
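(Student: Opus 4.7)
The plan is to invoke Theorem~\ref{th:instance1} directly with a constant function $\phi$. Suppose an exploration algorithm uses advice of size at most $c'n$ for some constant $c' > 0$. I would choose an integer constant $c$ with $c \geq 4c' + 1$ and set $\phi(n) = c$ for all $n$, so that for all sufficiently large $n$ the inequality $c'n \leq \frac{cn}{4} - 1 = \frac{n\phi(n)}{4} - 1$ holds; this is precisely the explicit hypothesis used inside the proof of Theorem~\ref{th:instance1} (the statement is phrased in terms of $o(n\phi(n))$, but the argument actually establishes the bound for any advice of size at most $\frac{n\phi(n)}{4}-1$).

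Applying Theorem~\ref{th:instance1} with this $\phi$ then produces, for arbitrarily large $n$, an $n$-node graph in the family $\cG_X$ on which the algorithm must spend time $\Omega(n^2/2^{\phi(n)}) = \Omega(n^2/2^c)$. Since $c$ is a fixed constant, the factor $2^c$ is absorbed into the hidden constant of the $\Omega$-notation, yielding the desired bound $\Omega(n^2)$.

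There is no real obstacle to overcome here: the heavy lifting has already been done by Theorem~\ref{th:instance1}, and the corollary simply records the observation that when $\phi$ is taken to be a constant, the $2^{\phi(n)}$ denominator in the lower bound is itself constant and therefore does not weaken the quadratic bound. The only minor subtlety is translating the big-$O$ hypothesis on advice size into the explicit numerical bound $\frac{n\phi(n)}{4}-1$ used inside the proof, and that is handled by choosing the constant $c$ sufficiently large relative to the hidden constant in $O(n)$.
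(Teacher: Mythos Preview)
Your proposal is correct and follows essentially the same approach as the paper: both arguments set $\phi$ to a constant $c$ chosen large enough that $O(n)$ advice fits under the explicit bound $\frac{n\phi(n)}{4}-1$ from the proof of Theorem~\ref{th:instance1}, and then observe that the resulting $\Omega(n^2/2^c)$ lower bound is $\Omega(n^2)$ since $c$ is fixed. You are in fact slightly more careful with the constants than the paper's own one-line justification.
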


For $\phi(n)\in o(\log n)$, Theorem \ref{th:instance1} implies an exploration time $\omega(n)$ which shows that the upper bound on the size of advice from Proposition \ref{ub}
is asymptotically tight for exploration in linear time. The following corollary improves this statement significantly, showing that exploration time is very sensitive to the size of advice at the threshold $\Theta(n\log n)$ of the latter.

\begin{corollary}\label{th:instance2}
Consider any  constant $\epsilon <2$.
Any exploration algorithm using advice of size $o(n \log n)$ must take time $\Omega(n^{\epsilon})$,  on some $n$-node graph,
for arbitrarily large $n$.
\end{corollary}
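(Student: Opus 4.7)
The plan is to obtain this statement as a direct instantiation of Theorem \ref{th:instance1} with a carefully chosen $\phi$. Given a constant $\epsilon<2$, I would set the parameter
$$\phi(n) \;=\; \lceil (2-\epsilon)\log n\rceil,$$
which is a function $\mathbb{N}\to\mathbb{N}$ satisfying $\phi(n)=\Theta(\log n)$. Since constants inside big-$O$ classes are absorbed, the hypothesis classes coincide: advice of size $o(n\log n)$ is exactly the same as advice of size $o(n\phi(n))$. Thus any exploration algorithm using advice of size $o(n\log n)$ falls under the regime covered by Theorem \ref{th:instance1}.

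Next, I would apply Theorem \ref{th:instance1} with this $\phi$ to conclude that, for arbitrarily large $n$, there exists an $n$-node graph in the family $\cG_X$ on which the exploration time is $\Omega(n^2/2^{\phi(n)})$. A routine estimate then finishes the argument: from $\phi(n)\le (2-\epsilon)\log n + 1$ we get $2^{\phi(n)}\le 2\,n^{2-\epsilon}$, whence
$$\frac{n^2}{2^{\phi(n)}} \;\ge\; \frac{n^2}{2\,n^{2-\epsilon}} \;=\; \frac{n^{\epsilon}}{2}.$$
Therefore the lower bound on exploration time is $\Omega(n^\epsilon)$ on some $n$-node graph, for arbitrarily large $n$, which is exactly the claim of the corollary.

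There is no substantial obstacle to this proof; it is a clean corollary whose only subtlety is checking that $\phi(n)=\Theta(\log n)$ is large enough to make $o(n\log n)\subseteq o(n\phi(n))$ (in fact they are equal) while being small enough that $2^{\phi(n)}=O(n^{2-\epsilon})$. Choosing the coefficient $2-\epsilon$ in the definition of $\phi$ is precisely what balances these two requirements, and the rest is arithmetic on Landau symbols.
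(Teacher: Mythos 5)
Your proof is correct and follows the same strategy as the paper: instantiate Theorem~\ref{th:instance1} with a suitably chosen $\phi$ and simplify the resulting bound. The only difference is the choice of $\phi$ --- the paper derives $\phi(n)=\log n/f(n)$ with $f\to\infty$ from the given advice function and then notes $n^{1/f(n)}=O(n^\delta)$ for every $\delta>0$, whereas you fix $\phi(n)=\lceil(2-\epsilon)\log n\rceil$ depending only on $\epsilon$ --- and your version is, if anything, slightly cleaner, since with your $\phi$ the identity $o(n\log n)=o(n\phi(n))$ holds verbatim and the final arithmetic gives $\Omega(n^\epsilon)$ directly.
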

\begin{proof}
If the size of advice is $o(n\log n)$, then it is $n \phi(n)$, where $\phi(n)=\frac{\log n}{f(n)}$, with
$f(n) \rightarrow \infty$ as $n \rightarrow \infty$. Theorem  \ref{th:instance1} implies that exploration time must be $\Omega\left(\frac{n^2}{2^\frac{\log n}{f(n)}}\right)=
\Omega\left(\frac{n^2}{n^\frac{1}{f(n)}}\right)$. Since, for any constant $\delta >0$, we have $n^{\frac{1}{f(n)}} \in O(n^{\delta})$, the corollary holds.
\end{proof}

\section{Exploration of hamiltonian graphs}

In this section we turn attention to hamiltonian graphs. These graphs have  a special feature from the point of view of exploration: with sufficiently large advice
of appropriate type, the agent can explore a hamiltonian graph without any loss of time, visiting each node exactly once, i.e., in time $n-1$, for $n$-node graphs.
Indeed, an instance oracle can give as advice the  sequence of port numbers along a hamiltonian cycle, from the starting node of the agent, and then the agent takes the prescribed ports in $n-1$ consecutive steps. Since it is enough to give $n-1$ port numbers, and the binary representation of each port number uses $O(\log n)$ bits, advice of size $O(n\log n)$, given by an instance oracle, suffices.

We show that neither the quality nor the size of advice can be decreased to achieve the goal of optimal exploration of hamiltonian graphs. To prove the first statement,
we show a graph which is impossible to explore in time $n-1$ when advice of any size is given by a map oracle. Indeed, we construct an $n$-node hamiltonian graph
for which even knowing the entire map of the graph (but not knowing its starting node) an agent must use time $\Omega(n^2)$ to explore it.  To prove the second statement, we construct a class  of $n$-node hamiltonian graphs for which advice of size $o(n\log n)$, even given by an instance oracle, is not enough to permit exploration of graphs in this class in time $n-1$. Indeed, we show more: any exploration algorithm using such advice must exceed the optimal time $n-1$ by a summand
$n^\epsilon$, for any $\epsilon <1$, on some graph of this class.

In order to prove the first result, we construct a $(3n)$-node hamiltonian graph $\tilde{G}$ from the $n$-node graph $\widehat{G}$ described in Section \ref{sec:map}.
First, we  consider an $\frac{m}{2}$-regular $m$-node hamiltonian graph $H$ (for example, the complete bipartite graph). Let $v_1, v_2, \cdots , v_m$ be the nodes of $H$ along a hamiltonian cycle. The graph $\widehat{G}$ is constructed from $H$ as described in Section \ref{sec:map}, where the hamiltonian path $(v_1, v_2, \cdots , v_m)$ is taken as the spanning tree $T$. We construct the hamiltonian graph $\tilde{G}$ from the graph $\widehat{G}$ as follows.
Denote by $d(v)$ the degree of node $v$ in $\widehat{G}$.
For each node $v$ in $\widehat{G}$, consider a cycle of  three nodes $v(1)$, $v(2)$, and $v(3)$, in $\tilde{G}$, with port numbers $3d(v), 3d(v)+1$
in clockwise order at each of these three nodes.
For each edge $(u,v)$ in $\widehat{G}$, such that the port numbers corresponding to this edge are $p$ at $u$ and $q$ at $v$, add, in $\tilde{G}$, the edges $(u(i), v(j))$, for $1 \le i,j \le 3$, with the following port numbers. The port numbers corresponding to edge $(u(i), v(j))$ are:  $p+(j-1)d(u)$ at $u(i)$ and  $q+(i-1)d(v)$ at $v(j)$,
see Fig. \ref{fig:HAM1}.

\begin{figure}
\begin{subfigure}{.5\textwidth}
  \centering
  \includegraphics[width=.6\linewidth]{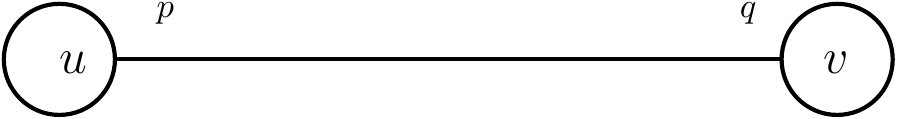}
\end{subfigure}%
\hspace{-1cm}
\begin{subfigure}{.5\textwidth}
  \centering
  \includegraphics[width=1\linewidth]{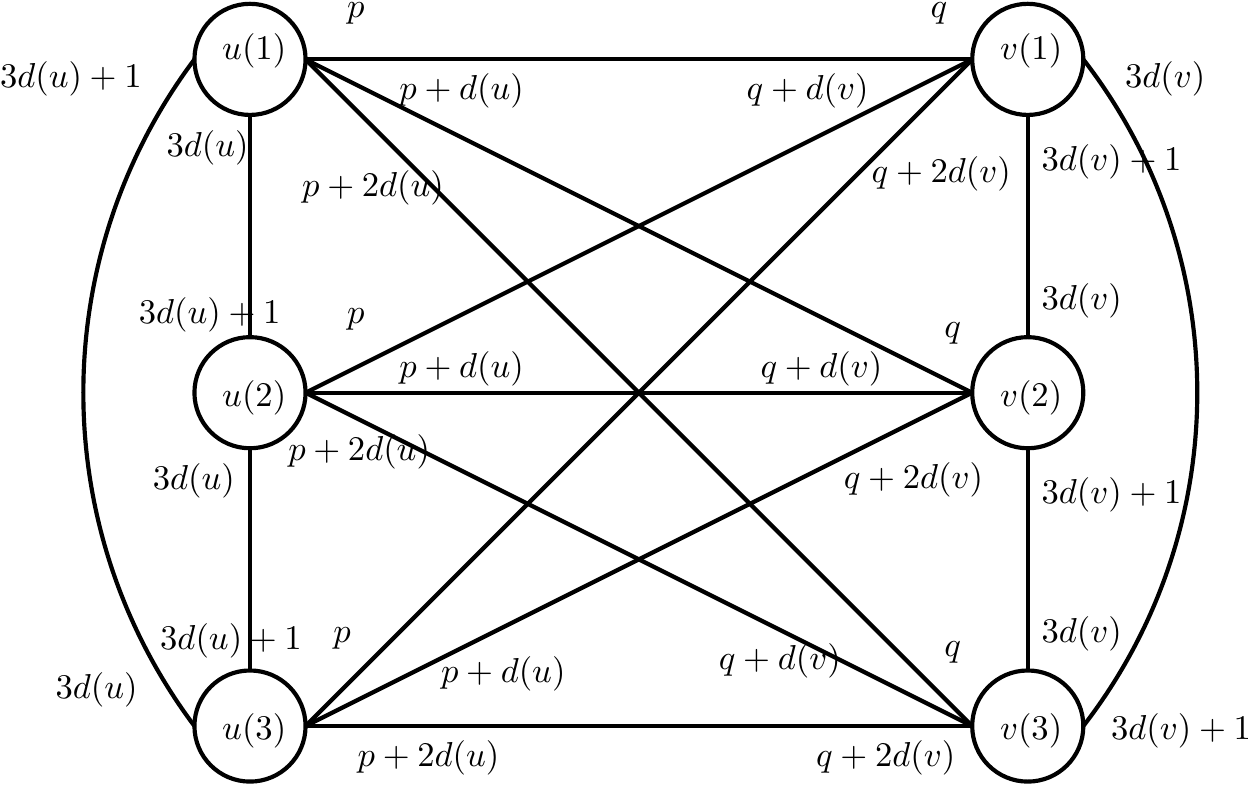}
\end{subfigure}
\caption{The construction of $\tilde{G}$ from $\widehat{G}$}
\label{fig:HAM1}
\end{figure}

\begin{lemma}
The graph $\tilde{G}$ is hamiltonian.
\end{lemma}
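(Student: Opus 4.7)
The plan is to exploit the observation that $\tilde{G}$ is the lexicographic product $\widehat{G}[K_3]$: each node of $\widehat{G}$ is replaced by a triangle, and each edge of $\widehat{G}$ by a copy of $K_{3,3}$. Hamiltonicity of such a blowup follows whenever the base graph admits a spanning tree of maximum degree at most $3$, via lifting a DFS closed walk on that tree. So my strategy is to (i) construct such a low-degree spanning tree of $\widehat{G}$ and (ii) lift a DFS closed walk on it to a hamiltonian cycle of $\tilde{G}$.

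For step (i), I construct a spanning tree $\tau$ of $\widehat{G}$ of maximum degree at most $3$ as follows. Delete the edge $(y_m, y_1)$ from the main cycle $C$ to get a path $(y_1,\ldots,y_m)$ of degree at most $2$. Inside each $H_{x(v_i)}$, both ``spines'' $(v'_1, v'_2, \ldots, v'_m)$ and $(v''_1, v''_2, \ldots, v''_m)$ are present as subgraphs, because the construction of $H_x$ doubles every tree edge of $T$, where $T$ is now the hamiltonian path of $H$; since $x(v_i)\neq 0^s$, at least one crossing edge $(v'_p, v''_q)$ is also present, and I use exactly one such edge to fuse the two spines into a sub-tree $\tau_i$ of $H_{x(v_i)}$. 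Finally, glue the $\tau_i$'s to the main path via the edges $(y_i, v'_1(i))$. A direct degree count then gives $\max_v \deg_\tau(v)\le 3$: each $y_i$ uses at most two path edges plus one attaching edge; each inner spine node uses at most two spine edges; the two endpoints of the chosen crossing gain one extra unit of degree; and $v'_1(i)$ has degree at most $1+1+1=3$ (attaching edge, spine edge, plus at most one crossing when $p=1$).

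For step (ii), perform DFS on $\tau$ from any root to obtain a closed walk $W$ in $\widehat{G}$ that visits every $v\in V(\widehat{G})$ exactly $c_v:=\deg_\tau(v)\in\{1,2,3\}$ times. To each visit of $v$ I assign a \emph{chunk}, i.e., a subpath inside the triangle on $\{v(1),v(2),v(3)\}$: if $c_v=1$, the chunk is the length-$2$ path through all three triangle vertices; if $c_v=2$, one chunk is a single vertex and the other is a length-$1$ edge together covering the three vertices; if $c_v=3$, the three chunks are the three singletons. Consecutive chunks along $W$ are joined by a $K_{3,3}$-edge corresponding to the underlying walk step, and the only remaining freedom is to decide which vertex $v(a)$ serves as entry/exit at each visit. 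Because every $K_{3,3}$ realizes all nine pairings of triangle vertices, the port selected at one end of a walk step is independent of the one selected at the other end, so the chunk constraints decouple into local feasibility problems at each triangle, which are trivially solvable in each of the three cases of $c_v$. The lifted object is a closed walk (since $W$ is), visits each of the $3n$ vertices of $\tilde{G}$ exactly once (since the chunks at $v$ partition its triangle), and is therefore a hamiltonian cycle of $\tilde{G}$.

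The main obstacle is step (i): since $\widehat{G}$ contains vertices of degree $\Theta(m)$, the existence of a spanning tree of maximum degree $3$ is far from automatic. The key insight is that $H_{x(v_i)}$ is already built from two parallel low-degree spines inherited from the paper's choice of $T$ as the hamiltonian path of $H$, and the assumption $x(v_i)\neq 0^s$ supplies at least one crossing edge available to fuse those two spines while inflating the degree by only one at two vertices. Once such a spanning tree is in hand, step (ii) is essentially mechanical, because the $K_{3,3}$ inflation of each edge decouples the port-assignment problems at its two endpoints.
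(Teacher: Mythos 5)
Your proof is correct and follows essentially the same approach as the paper: you build the same maximum-degree-$3$ spanning tree of $\widehat{G}$ (two parallel spines plus one crossing edge in each $H_{x(v_i)}$, the main-cycle path, and the attaching edges $(y_i,v'_1(i))$) and lift its Euler tour to a hamiltonian cycle of $\tilde{G}$. The only difference is that you spell out the ``chunk'' argument for why the Euler tour lifts to a hamiltonian cycle, a step the paper asserts in a single sentence.
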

\begin{proof}
 According to the definition of $H_{x(v_i)}$ in Section \ref{sec:map}, this graph has a spanning tree $T_i$ that contains two copies of $T$, corresponding to $H'$ and $H''$, and an edge which crosses from $H'$ to $H''$. Consider the spanning tree $\widehat{T}$ of $\widehat{G}$ that contains all $T_i$, for $1 \le i \le m$, a spanning path $(y_1,\cdots ,y_m)$ of
 the main cycle, and the set of edges $(y_i, v'_1(i))$. See Fig. \ref{fig:HAM2}. Note that the maximum degree of a node in $\widehat{T}$ is three. Since each node in $\widehat{G}$ is replaced by a cycle of three nodes in $\tilde{G}$, an Euler tour of  the tree $\widehat{T}$ yields  a hamiltonian cycle in $\tilde{G}$. Therefore, $\tilde{G}$ is hamiltonian.
\end{proof}

\begin{figure}[h]
\centering
\includegraphics[width=0.7\textwidth]{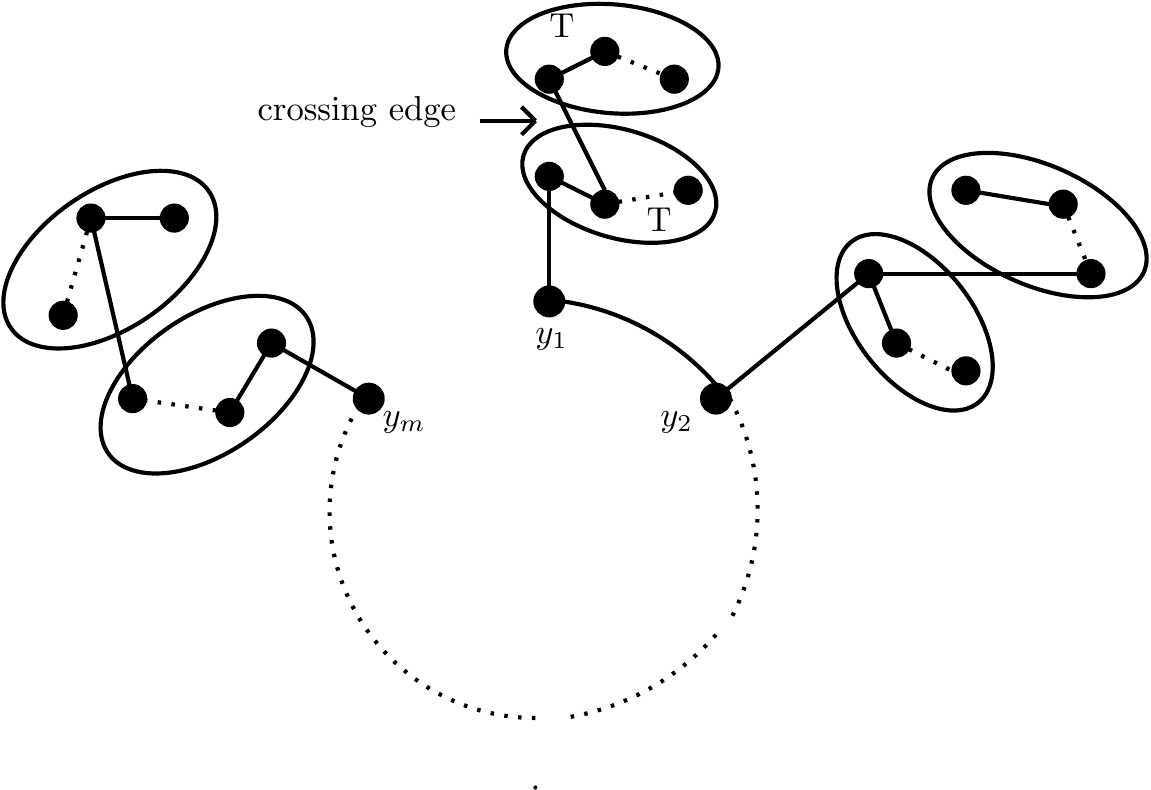}
\caption{The spanning tree $\widehat{T}$ of $\widehat{G}$}
\label{fig:HAM2}
\end{figure}
Let $\cB$ be an exploration algorithm for $\tilde{G}$ starting from node $y_i(1)$,  for some $i \leq m$. We describe the following algorithm $\cB^*$ on $\widehat{G}$,
starting from node $y_i$. Ignore all moves of $\cB$ taking port
$3d(v)$ or  $3d(v) +1$ at a node $v(j)$, for $1 \leq j \leq 3$,  of $\tilde{G}$. Replace every move of  $\cB$ taking port $r=p+(i-1)d(v)$, at node $v(j)$, for $1 \leq j \leq 3$, in $\tilde{G}$, where $0 \leq p \leq d(v)-1$, by  a move taking port $p$ in
$\widehat{G}$.

Then the agent executing $\cB ^*$ in $\widehat{G}$, starting from the main cycle, explores all the nodes. The time used by $\cB ^*$ in $\widehat{G}$ does not exceed the time used by
$\cB$ in $\tilde{G}$.
Since, by Theorem \ref{th:thmap}, any exploration algorithm for $\widehat{G}$, starting from the main cycle, must take time $\Omega(n^2)$,  algorithm $\cB$ must take time $\Omega(n^2)$  to explore $\tilde{G}$. Replacing $3n$ by $n$ we have the following theorem.

\begin{theorem}
Any exploration algorithm using any advice given by a map oracle must take time $\Omega(n^2)$ on some $n$-node hamiltonian graph, for arbitrarily large $n$.
\end{theorem}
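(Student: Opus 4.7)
The plan is to reduce the claim to Theorem \ref{th:thmap} using the construction $\widehat{G} \mapsto \tilde{G}$ and the translation $\cB \mapsto \cB^*$ set up in the preceding text. Given any map oracle and any exploration algorithm $\cB$ for hamiltonian graphs, I would feed $\cB$ the advice string $\sigma = f(\tilde{G})$, which is a single string because a map oracle's output depends only on the graph; the same $\sigma$ can be reinterpreted as advice for $\widehat{G}$ (which is recoverable from $\tilde{G}$) and handed to $\cB^*$. Since $\tilde{G}$ is hamiltonian (by the preceding lemma), this is a legitimate instance for $\cB$.

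Next, I would verify the two required properties of $\cB^*$. First, correctness: if $\cB$ starts at $y_i(1)$, then $\cB^*$ starts at $y_i$, a node of the main cycle of $\widehat{G}$, and because $\cB$ visits all $3n$ nodes of $\tilde{G}$, in particular each 3-cycle $\{v(1),v(2),v(3)\}$ is entered at least once---either at the starting node, or via a cross-cycle edge whose translation produces a visit to $v$ in $\widehat{G}$. Second, timing: each step of $\cB^*$ arises from a distinct step of $\cB$ (namely one that does not take an intra-cycle port $3d(v)$ or $3d(v)+1$), so the running time of $\cB^*$ is at most that of $\cB$. Applying Theorem \ref{th:thmap} to $\cB^*$ on $\widehat{G}$, some starting $y_i$ forces $\Omega(n^2)$ steps; hence $\cB$ also needs $\Omega(n^2)$ steps on $\tilde{G}$. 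Since $\tilde{G}$ has $N = 3n$ nodes, the bound becomes $\Omega(N^2)$, giving the theorem after renaming.

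The main obstacle, I expect, is verifying that the translation $q + (i-1)d(v) \mapsto q$ on the outgoing ports at each $v(j)$ turns the walk of $\cB$ in $\tilde{G}$ into a well-defined deterministic algorithm on $\widehat{G}$ with the correct incoming-port/degree information at every step. This rests on the obliviousness property inherited by $\tilde{G}$ from $\widehat{G}$: for an exploration starting on the main cycle, the incoming port and degree at each step are predetermined by the port sequence, so $\cB$ on $\tilde{G}$ can be coded purely as a sequence of port numbers, and the translation is then just a relabeling and a deletion of the intra-cycle symbols. Once this bookkeeping is checked, the reduction goes through cleanly and the lower bound follows.
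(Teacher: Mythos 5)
Your proposal follows essentially the same route as the paper: construct the hamiltonian graph $\tilde{G}$ by replacing each node of $\widehat{G}$ with a $3$-cycle, translate any exploration algorithm $\cB$ on $\tilde{G}$ into an algorithm $\cB^*$ on $\widehat{G}$ by deleting intra-cycle moves and reducing cross-cycle port numbers modulo the degree, observe that $\cB^*$ explores $\widehat{G}$ in time no larger than $\cB$ uses on $\tilde{G}$, and invoke Theorem~\ref{th:thmap}. Your extra remarks about obliviousness carrying over to $\tilde{G}$ and about each $3$-cycle being first entered through a cross-cycle edge are exactly the bookkeeping the paper leaves implicit, so the argument matches.
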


Our last result shows that advice of size $o(n\log n)$ causes significant increase of exploration time for some hamiltonian graphs, as compared to optimal time $n-1$ achievable with advice of size $O(n\log n)$, given by an instance oracle.

\begin{theorem}
For any constant $\epsilon <1$, and for any
exploration algorithm using advice of size $o(n \log n)$, this algorithm must take time $n+ n^{\epsilon}$, on some $n$-node hamiltonian graph,
for arbitrarily large $n$.
\end{theorem}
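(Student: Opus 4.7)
The plan is to adapt the construction of $\cG_X$ and the pigeonhole argument of Theorem \ref{th:instance1} to a family of hamiltonian graphs. First I would construct such a family by modifying $G_x$: taking the base graph $G$ to be an $(m/2)$-regular hamiltonian graph on $m$ nodes $v_1, \ldots, v_m$ arranged along its hamiltonian cycle, I would build $\widehat{G_x}$ by attaching a pendant $v_i'$ at each $v_i$ via port $x_i$ (as in Theorem \ref{th:instance1}), and then adding the pairing edges $(v_{2k-1}', v_{2k}')$ for $k = 1, \ldots, m/2$. Setting port $0$ at each $v_i'$ for the edge to $v_i$ and port $1$ for the pair edge, the cycle $v_1 \to v_1' \to v_2' \to v_2 \to v_3 \to v_3' \to v_4' \to v_4 \to \cdots \to v_m' \to v_m \to v_1$ is a hamiltonian cycle of the resulting $n$-node graph, where $n = 2m$. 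The family $\widehat{\cG_X}$ then has size $(m/2)^m$, and the key reachability property is that the pair $\{v_{2k-1}', v_{2k}'\}$ can be entered from outside only via port $x_{2k-1}$ at $v_{2k-1}$ or port $x_{2k}$ at $v_{2k}$.

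Next I would apply the pigeonhole principle in the same way as in Theorem \ref{th:instance1}: advice of size $o(n \log n) = o(m \log m)$ takes at most $2^{o(m \log m)}$ distinct values, so some subfamily $\widehat{\cG} \subseteq \widehat{\cG_X}$ of size at least $(m/2)^m / 2^{o(m \log m)}$ shares the same advice string.

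Finally I would lower-bound the exploration time on the worst instance in $\widehat{\cG}$. Since each distinct port taken at a main node $v_i$ requires a separate visit to $v_i$, the exploration time on $\widehat{G_x}$ is at least $(m - 1) + \sum_{i=1}^m |A_i(x)|$, where $|A_i(x)|$ is the number of distinct ports taken at $v_i$ and the constant $m - 1$ accounts for the $m$ obligatory pendant visits minus one (for the starting node). A counting argument analogous to the one in Theorem \ref{th:instance1}, parameterized in the additive regime in the way Corollary \ref{th:instance2} parameterizes Theorem \ref{th:instance1}, would exhibit some $\widehat{G_x} \in \widehat{\cG}$ on which $\sum_{i=1}^m |A_i(x)| \geq m + n^\epsilon$, giving total exploration time at least $n + n^\epsilon$.

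The main obstacle lies in that last counting step. In Theorem \ref{th:instance1} each pendant forced one particular port to be taken at its main node, but here the pair-pendant structure gives the agent a choice of two main nodes per pair (it may enter the pair via either), so the constraint on $\{A_i(x)\}_i$ is weaker per pair. Moreover, since pendants have degree $2$ while main nodes have degree $m/2 + 1$, the agent can detect pendant arrivals from the observed degree, and this adaptive behavior must be carefully absorbed into the counting. The key is to choose the pigeonhole parameter (depending on $\epsilon$) so as to keep $\widehat{\cG}$ large enough to force, even after allowing for the "either-or" per-pair choice and the adaptive degree observations, an average excess of at least $n^\epsilon$ port-takings at main nodes on some instance of $\widehat{\cG}$, yielding the desired additive lower bound $n + n^\epsilon$.
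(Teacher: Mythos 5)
Your construction is essentially the paper's: the paper also takes a regular hamiltonian base graph $G$ on $m=n/2$ nodes, attaches pendants $v_i'$ via port $x_i$, adds the pairing edges $(v'_{2i-1},v'_{2i})$, and exhibits the same hamiltonian cycle $(v_1,v_1',v_2',v_2,v_3,v_3',v_4',v_4,\dots)$. The pigeonhole step is also the same. However, the step you flag as ``the main obstacle'' is exactly the content of the theorem, and your proposal leaves it unresolved: you describe the two difficulties (the either-or entry into each pendant pair, and the agent's adaptivity from degree observations) and then gesture at ``choose the pigeonhole parameter,'' without carrying out the argument. That is a genuine gap, not a detail. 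Your intended route --- lower-bounding $\sum_i|A_i(x)|$ globally and extracting an average excess --- is also more elaborate than what is needed and is precisely what forces you into those complications.

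The paper closes the gap with a purely local argument. It reuses the machinery of Theorem~\ref{th:instance1}, but with the threshold set to $z=m^{\epsilon}/4$ instead of $m/2^{\phi(2m)+2}$, and defines $J=\{j : |\{x_j: x\in Y\}|\ge z\}$. The same counting as in the Claim of Theorem~\ref{th:instance1} (applied with the pigeonhole bound $|\cG'|\ge(m/2)^{m-\frac{m(1-\epsilon)}{2}}$, which holds once the advice has size at most $\frac{n(1-\epsilon)}{4}\log\frac{n}{8}=o(n\log n)$) again yields $|J|>m/2$. Since $|J|>m/2$, among the $m/2$ consecutive index pairs $\{2i-1,2i\}$ there must be one with \emph{both} indices in $J$. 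For that single pair, the two pendants $v'_{2i-1},v'_{2i}$ form a pocket reachable from the rest of $G'_x$ only through port $x_{2i-1}$ at $v_{2i-1}$ or port $x_{2i}$ at $v_{2i}$; since both index sets $\{x_{2i-1}:x\in Y\}$ and $\{x_{2i}:x\in Y\}$ have size at least $z$, an adversary choice of $G'_x\in\cG'$ forces the agent to try at least $z=m^{\epsilon}/4$ ports at one of $v_{2i-1},v_{2i}$ before finding the pocket (by the same indistinguishability argument as in Theorem~\ref{th:instance1}). Adding this wasted work to the unavoidable $n-2$ traversals gives time at least $n-2+m^{\epsilon}/4=n+\Omega(n^{\epsilon})$, and slackening $\epsilon$ gives the clean bound $n+n^{\epsilon}$. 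In short: one pair lying entirely inside $J$ suffices; the global sum over all $i$ that you attempt is unnecessary and is where your proof stalls.
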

\begin{proof}
Fix a constant $\epsilon <1$.
We show that if the size of the advice is at most $\frac{n(1-\epsilon)}{4} \log(\frac{n}{8})$, then there exists an $n$-node graph, for which the time required for exploration is $n+\Omega(n^{\epsilon})$.
The construction of the graphs is similar as in Section \ref{sec:instance}. Now, we start with a $\frac{n}{4}$-regular $\frac{n}{2}$-node hamiltonian graph $G$, for $n$ divisible by 4. We can use the complete bipartite graph with $\frac{n}{2}$ nodes, which is hamiltonian.
 We construct $G_x$ from $G$ as in Section \ref{sec:instance}. Let $v_1, v_2, \cdots , v_ \frac{n}{2}$ be the nodes of $G$ along a hamiltonian cycle. Let $G'_x$ be the graph which is obtained from $G_x$ by adding the set of edges $\{(v_{2i-1},v_{2i}) : 1 \le i \le \frac{n}{2}\}$. Each of these edges is given port number 2 at both its endpoints.
An example of the construction of $G'_x$ from $G$ is shown in Fig. \ref{fig:HAM3}.
Since $G$ is hamiltonian, $G'_x$ is also hamiltonian, as $(v_1,v_1',v_2',v_2, v_3, v_3', v_4', v_4, \cdots, v_\frac{n}{2},v_1)$ is a hamiltonian cycle in $G'_x$.

Let $\cG'_X$ be the set of all possible graphs $G'_x$ constructed from $G$. Then $|\cG'_X|=(\frac{m}{2})^m$, where $m=\frac{n}{2}$. There are fewer than $(\frac{m}{2})^{\frac{m(1-\epsilon)}{2}}$ different binary strings of length $\frac{m(1-\epsilon)}{2} \log(\frac{m}{4})=\frac{n(1-\epsilon)}{4} \log(\frac{n}{8})$. By the pigeonhole principle, there exists a family of graphs $\cG \subset \cG_X$, of size at least $(\frac{m}{2})^{m-\frac{m(1-\epsilon)}{2}}$, such that all the graphs in $\cG$ get the same advice.

Let $Y=\{x \in X : G'_x \in \cG'\}$.
Let $z=\frac{m^{\epsilon}}{4}$ and let  $J= \{j : |  \{x_j: x \in Y| \ge z\}$. Let $p=|J|$. By similar arguments as in the proof of Theorem \ref{th:instance1}, we have $p > \frac{m}{2}$. Therefore there exists some $i$, $1 \le i \le m$ for which both $v_{2i-1}$ and $v_{2i}$ belong to $J$. 

\begin{figure}
\begin{subfigure}{.5\textwidth}
  \centering
  \includegraphics[width=.6\linewidth]{fig1.pdf}
  \caption{An example of $G$ with six nodes}
\end{subfigure}%
\hspace{-1cm}
\begin{subfigure}{.5\textwidth}
  \centering
  \includegraphics[width=1\linewidth]{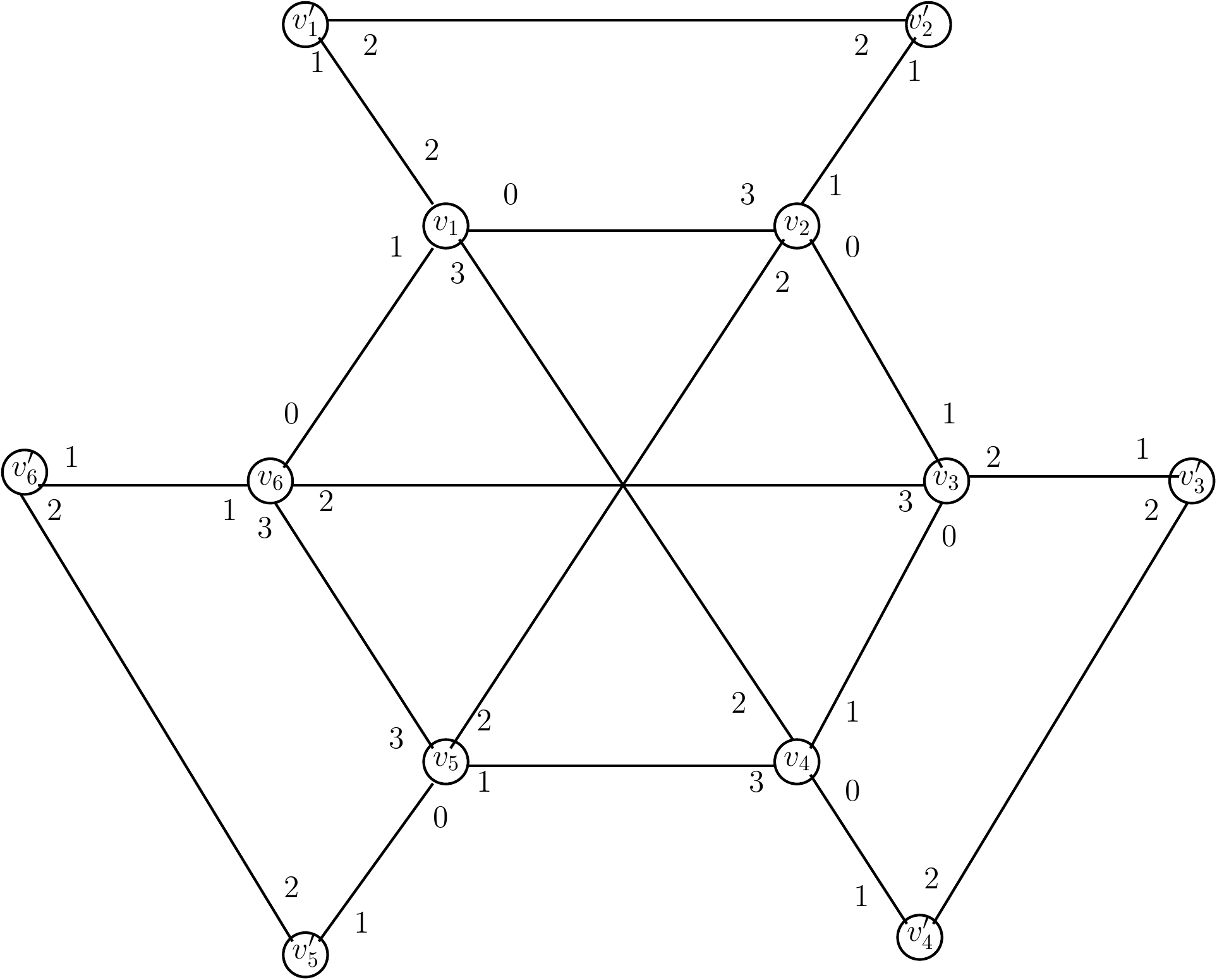}
  \caption{$G'_x$ for the sequence $x=(2,1,2,0,0,1)$}
\end{subfigure}
\caption{The construction of $G'_x$ from $G$}
\label{fig:HAM3}
\end{figure}

Consider any exploration algorithm for the class $\cG'$.
Let the agent start from some node $v_k$, $1 \le k \le m$. The agent must traverse one of the edges $(v_{2i-1}, v'_{2i-1})$ or $(v_{2i}, v'_{2i})$ in order to visit the nodes $v'_{2i-1}$ and $v'_{2i}$. Hence, by a similar argument as in the proof of Theorem \ref{th:instance1}, the agent must take at least $\frac{m^{\epsilon}}{4}$ ports at the node
$v_{2i-1}$ or $v_{2i}$, on some graph $G'_x \in \cG'$. To visit the remaining $(n-1)$ nodes, the agent needs time at least $n-2$. Hence, the time required for visiting all the nodes of $G'_{x}$ is at least $n-2 + \frac{m^{\epsilon}}{4}$, i.e., the time required for exploration is $n+ \Omega(n^{\epsilon})$.
Since this holds for any $\epsilon <1$, the simpler lower bound $n+ n^{\epsilon}$, for any $\epsilon <1$, holds as well, for arbitrarily large $n$.
\end{proof}

\section{Conclusion}

Most of our lower bounds on the size of advice are either exactly or asymptotically tight. The lower bound $\log\log\log n -\Theta(1)$ on the size of advice sufficient to explore all $n$-node graphs in polynomial time is exactly tight: with advice of any such size, polynomial exploration is possible, and with advice of any smaller size it is not.
For an instance oracle, the lower bound $\Omega (n\log n)$ on the size of advice sufficient to explore $n$-node graphs in $O(n)$ time is asymptotically tight, as  we gave a linear time exploration algorithm using advice of size $O(n\log n)$.
An exception to this tightness is the lower bound on the size of advice given by a map oracle, permitting exploration in time $O(n^2)$. While the natural upper bound
is $O(n\log n)$, our lower bound is only $n^{\delta}$ for any $\delta <\frac{1}{3}$. Hence the main remaining question is:

What is the smallest advice, given by a map oracle,  permitting exploration of $n$-node graphs in time $O(n^2)$?

{\bf Acknowledgements.}
We are grateful to Adrian Kosowski for early discussions on the subject of this paper and for drawing our attention to \cite{BRT}.


\begin{thebibliography}{12}

\bibitem{AKM01}
S.~Abiteboul, H.~Kaplan, T.~Milo, Compact labeling schemes for ancestor
queries, Proc. 12th Annual ACM-SIAM Symposium on Discrete
Algorithms (SODA 2001), 547--556.



\bibitem{AH}
S. Albers and M. R. Henzinger,
Exploring unknown environments,
SIAM Journal on Computing 29 (2000), 1164-1188.

\bibitem{AKLLR}
R. Aleliunas, R. Karp, R. Lipton, L. Lovasz C. Rackoff, Random walks, universal traversal sequences, and the complexity of maze problems,
Proc. 20th Annual IEEE Symposium on Foundations of Computer Science (FOCS 1979), 218-223.

\bibitem{ABRS}
B. Awerbuch, M. Betke, R. Rivest and M. Singh,
Piecemeal graph learning by a mobile robot,
Proc. 8th Conf. on Comput. Learning Theory (1995), 321-328.



\bibitem{BBFY}
E. Bar-Eli, P. Berman, A. Fiat and R. Yan,
On-line navigation in a room,
Journal of Algorithms 17 (1994), 319-341.

\bibitem{BFRSV}
M.A. Bender, A. Fernandez, D. Ron, A. Sahai and S. Vadhan,
The power of a pebble: Exploring and mapping directed graphs,
Proc. 30th Ann. Symp. on Theory of Computing (STOC 1998), 269-278.

\bibitem{BS}
M.A. Bender and D. Slonim, The power of team exploration:
Two robots can learn unlabeled directed graphs,
Proc. 35th Ann. Symp. on Foundations of Computer Science (FOCS 1994),
75-85.


\bibitem{BRS2}
M. Betke, R. Rivest and M. Singh,
Piecemeal learning of an unknown environment,
Machine Learning 18 (1995), 231-254.

\bibitem{BRS}
A. Blum, P. Raghavan and B. Schieber,
Navigating in unfamiliar geometric terrain,
SIAM Journal on Computing 26 (1997), 110-137.

\bibitem{BRT}
A. Borodin, W. Ruzzo, M. Tompa,
Lower bounds on the length of universal traversal sequences,
Journal of Computer and System Sciences 45 (1992), 180-203.



\bibitem{CDK}
J. Chalopin, S. Das, A. Kosowski,
Constructing a map of an anonymous graph: Applications of universal sequences,
Proc. 14th International Conference on Principles of Distributed Systems (OPODIS 2010), 119-134.





\bibitem{DKP}
X. Deng, T. Kameda and C. H. Papadimitriou,
How to learn an unknown environment I: the rectilinear case,
Journal of the ACM 45 (1998), 215-245.




\bibitem{DP}
D. Dereniowski, A. Pelc, Drawing maps with advice,  Journal of Parallel and Distributed Computing 72 (2012), 132--143.







\bibitem{DFKP}
K. Diks, P. Fraigniaud, E. Kranakis, and A. Pelc,
Tree exploration with little memory,
Journal of Algorithms 51 (2004), 38-63.


 \bibitem{DiPe}
 Y. Dieudonn\'{e}, A. Pelc, Impact of Knowledge on Election Time in Anonymous Networks. CoRR abs/1604.05023 (2016).
 
 \bibitem{DKM}
S. Dobrev, R. Kralovic, and E. Markou, Online graph exploration with advice,  Proc. 19th International Colloquium on Structural Information and Communication Complexity (SIROCCO 2012), 267-278.





\bibitem{DKK}
C.A. Duncan, S.G. Kobourov and V.S.A. Kumar,
Optimal constrained graph exploration,
Proc. 12th Ann. ACM-SIAM Symp. on Discrete Algorithms (SODA 2001), 807-814.

\bibitem{EFKR}
Y. Emek, P. Fraigniaud, A. Korman, A. Rosen, Online computation with advice, Theoretical Computer Science 412 (2011), 2642--2656.




\bibitem{FGIP}
P. Fraigniaud, C. Gavoille, D. Ilcinkas, A. Pelc,
Distributed computing with advice: Information sensitivity of graph coloring,
Distributed Computing 21 (2009), 395--403.

\bibitem{FIP1}
P. Fraigniaud, D. Ilcinkas, A. Pelc,
Communication algorithms with advice, Journal of  Computer and System Sciences 76 (2010), 222--232.

\bibitem{FIP2}
P. Fraigniaud, D. Ilcinkas, A. Pelc,
Tree exploration with advice, Information and Computation 206 (2008), 1276--1287.

\bibitem{FKL}
P. Fraigniaud, A. Korman, E. Lebhar,
Local MST computation with short advice,
Theory of Computing Systems 47 (2010), 920--933.


\bibitem{FI}
P. Fraigniaud and D. Ilcinkas.
Directed graphs exploration with little memory,
Proc. 21st Symposium on Theoretical Aspects of
Computer Science (STACS 2004), 246-257.

\bibitem{FP}
E. Fusco, A. Pelc, Trade-offs between the size of advice and broadcasting time in trees, Algorithmica 60 (2011), 719--734.


\bibitem{FPR}
E. Fusco, A. Pelc, R. Petreschi, Use knowledge to learn faster: Topology recognition with advice, Proc. 27th International Symposium on Distributed Computing (DISC 2013), 31-45.

\bibitem{GPPR02}
C.~Gavoille, D.~Peleg, S.~P\'{e}rennes, R.~Raz.
Distance labeling in graphs,
Journal of Algorithms 53 (2004), 85-112.

\bibitem{GMP}
C. Glacet, A. Miller, A. Pelc, Time vs. information tradeoffs for leader election in anonymous trees,
Proc. 27th Annual ACM-SIAM Symposium on Discrete Algorithms (SODA 2016), 600-609.

\bibitem{IKP}
D. Ilcinkas, D. Kowalski, A. Pelc,
Fast radio broadcasting with advice,
 Theoretical Computer Science, 411 (2012),  1544--1557.

 \bibitem{KKP05}
A. Korman, S. Kutten, D. Peleg, Proof labeling schemes,
Distributed Computing 22 (2010), 215--233.









\bibitem{SN}
N. Nisse, D. Soguet, Graph searching with advice,
Theoretical Computer Science 410 (2009), 1307--1318.




\bibitem{PaPe}
P. Panaite and A. Pelc,
Exploring unknown undirected graphs,
Journal of Algorithms 33 (1999), 281-295.

\bibitem{PaPe2}
P. Panaite, A. Pelc,
Optimal broadcasting in faulty trees,
Journal of Parallel and Distributed Computing 60 (2000), 566-584.

\bibitem{PeTi}
A. Pelc, A. Tiane, Efficient grid exploration with a stationary token,
International Journal of Foundations of Computer Science 25 (2014), 247-262.

\bibitem{RKSI}
N. S. V. Rao, S. Kareti, W. Shi and S.S. Iyengar,
Robot navigation in unknown terrains: Introductory survey of non-heuristic
algorithms,
Tech. Report ORNL/TM-12410, Oak Ridge National Laboratory, July 1993.

\bibitem{Re}
O. Reingold, Undirected connectivity in log-space, Journal of the ACM 55 (2008).























\end{thebibliography}
\end{document}